\newtheorem{thm}{Theorem}
\newtheorem{col}{Corollary}
\begin{document}
\title{On the Coverage and Capacity of Ultra-Dense Networks with Directional Transmissions}
	\author{Yining Xu,
          Sheng~Zhou,~\IEEEmembership{Member,~IEEE}
    \thanks{This work is sponsored in part by the Nature Science Foundation of China (No. 61871254, No. 91638204, No. 61861136003, No. 61571265, No. 61621091), and Intel. \emph{(Corresponding author: Sheng Zhou.)}}
    \thanks{The authors are with Beijing National Research Center for Information Science and Technology, Department of Electronic Engineering, Tsinghua University, Beijing 100084, China. Email: xu-yn16@mails.tsinghua.edu.cn, sheng.zhou@tsinghua.edu.cn.}
        }
\maketitle

\begin{abstract}
 We investigate the performance of a downlink ultra-dense network (UDN) with directional transmissions via stochastic geometry. Considering the dual-slope path loss model and sectored beamforming pattern, we derive the expressions and asymptotic characteristics of the coverage probability and constrained area spectrum efficiency (ASE). Several special scenarios, namely the physically feasible path loss model and adjustable beam pattern, are also analyzed. Although \emph{signal-to-interference-plus-noise ratio collapsing} still exists when the path loss exponent in the near-field is no larger than 2, using strategies like beam pattern adaption, can avoid the decrease of the coverage probability and constrained ASE even when the base station density approaches infinity.
\end{abstract}
\begin{IEEEkeywords}
UDN, directional transmissions, dual-slope path loss, stochastic geometry.
\end{IEEEkeywords}

\IEEEpeerreviewmaketitle
\section{Introduction}
Network densification, millimeter-wave (mmWave) communications and massive multiple-input multiple-output are regarded as the `big three' promising key technologies in 5G mobile communication systems \cite{6824752}. Performing directional transmissions in the mmWave band can help compensate for the strong path loss and reduce the inter-cell interference when moving towards ultra-dense networks (UDNs). Dual-slope path loss model shows good precision in mmWave UDNs \cite{7061455}, but it also brings difficulties in network performance analyses.

In this paper, we study the coverage and capacity performance of a downlink UDN with directional transmissions using stochastic geometry. For the coverage performance in the homogeneous cellular network with single-slope path loss model, ref.\cite{6042301} indicates that \emph{signal-to-interference-plus-noise ratio (SINR) invariance} exists. This property reveals that increasing the base station (BS) density does not affect the coverage probability under extremely dense deployment. The results with multi-slope path loss model are derived in \cite{7061455}\cite{7588290}. They find the near-field path loss exponent has a \emph{phase transition} feature, meaning that the throughput has different asymptotic characteristics depending on whether the near-field path loss exponent exceeds a threshold. A multi-slope path loss model with line-of-sight (LOS) and non-line-of-sight (NLOS) channels is investigated in \cite{8077766}. The results indicate that under sufficiently large network density, the coverage probability decreases and the constrained area spectrum efficiency (ASE) experiences a slow growth or \emph{even a decrease}. Ref.\cite{8375976} mainly focuses on the asymptotic characteristic of the ASE. Three possible definitions of the ASE and different path loss functions are studied. However, the expression of ASE is not derived and beamforming is not considered in this work. For the beamforming in the mmWave cellular networks, ref.\cite{6932503} assumes a single-slope path loss model with LOS and NLOS channels. While ref.\cite{8086182} considers the mmWave relaying under a two-ray path loss model, consisting of a LOS path and a reflection path. A Monte Carlo simulation of UDNs with directional transmissions is given in \cite{7962711} but the theoretical analysis is yet to be resolved. Our work differs from the aforementioned works in the following ways: 1) We derive the general expressions of the coverage probability and constrained ASE under the dual-slope path loss model in a downlink UDN with \emph{directional transmissions}. Additionally, the coverage probability is revealed to approach zero with dense BSs when the near-field path loss exponent is no larger than 2 and Rayleigh faded interference is considered. 2) We analyze the asymptotic characteristics under the physically feasible path-loss model and adjustable beam pattern. The monotonicity of coverage probability w.r.t the near-field intensity and constrained ASE w.r.t the beam alignment probability are proved for the former scenario. The beam pattern adaption scheme is proposed for the latter scenario.

\begin{figure}
  \centering
  \includegraphics[width=3.3in]{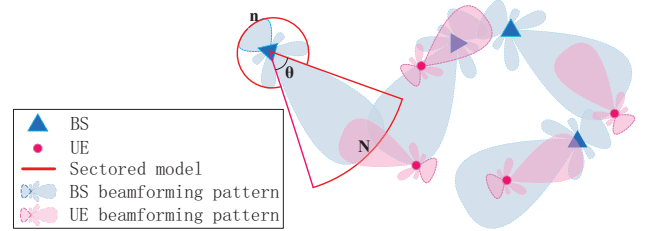}
  \setlength{\abovecaptionskip}{3pt}
  \setlength{\belowcaptionskip}{3pt}
  \caption{Sectored approximation of the beamforming pattern.}
  \vspace{-0.25in}
  \label{fig_net}
\end{figure}
\section{System Model}\label{secmodel}
Consider a downlink UDN where the locations of BSs form a Poisson point process (PPP) $\Phi$ with density $\lambda$ on a 2-dimensional infinite plane. The locations of user equipments (UEs) also form a PPP with density $\kappa$ which is independent of $\Phi$. Each UE associates with its nearest BS. The UEs associated to the same BS are allocated with orthogonal time-frequency resource blocks (RBs). The analysis is on a specific RB that is reused by all BSs. All BSs work in the full-loaded mode with the same transmission power $P_t$ and BS sleeping is not considered in this paper. The dual-slope path loss model we adopt is\cite{1232163}:
\newcounter{TempEqCnt}
\setcounter{TempEqCnt}{\value{equation}}
\setcounter{equation}{0}
\begin{figure*}[htb]
\hrule height 1pt
\vspace*{2pt}
\begin{thm}\label{thm1}
  The coverage probability of the UDN with directional transmissions under the dual-slope path loss model is
 \setlength{\abovedisplayskip}{2pt}
 \setlength{\belowdisplayskip}{2pt}
 \begin{align*}
   \begin{split}
    P_{c}(\lambda,T)
    \!=\!\!\!\int_0^{d_0}\!\!\!\!\emph{exp}\!\left(\!\!\frac{-\mu T\sigma^2}{N_B N_U\alpha_0r_0^{-\beta_1}}\!\!\right)\mathcal{L}_{I_{1}}\!\!\left(\!\!\frac{\mu T}{N_B N_U\alpha_0r_0^{-\beta_1}}\!\!\right) \!2\pi\lambda r_0\emph{d}r_0
    \!+\!\!\!\int_{d_0}^\infty \!\!\!\!\emph{exp}\!\left(\!\!\frac{-\mu T\sigma^2 d_0^{\beta_1-\beta_2}}{N_B N_U\alpha_0r_0^{-\beta_2}}\!\!\right)\mathcal{L}_{I_{2}}\!\!\left(\!\!\frac{\mu T d_0^{\beta_1-\beta_2}}{N_B N_U\alpha_0r_0^{-\beta_2}}\!\!\right) \!2\pi\lambda r_0\emph{d}r_0
   \end{split}
 \end{align*}
  where the functions $\mathcal{L}_{I_{1}}(x)$ and $\mathcal{L}_{I_{2}}(x)$ are given by
 \begin{align}\label{L_I_1}
  \begin{split}
    \mathcal{L}_{I_{1}}(x)
    =\prod_{k=1}^{4}\emph{exp}\Bigg(-&\frac{2\pi\lambda  b_k}{\beta_1}
    \left({x \alpha_0 a_k}\right)^{\frac{2}{\beta_1}}
    \int_{0}^\infty g_i^{\frac{2}{\beta_1}}
    \left(\Gamma\Big(-\frac{2}{\beta_1},x g_i \alpha_0 a_k r_0^{-\beta_1}\Big)
    -\Gamma\Big(-\frac{2}{\beta_1},x g_i \alpha_0 a_k d_0^{-\beta_1}\Big)\right)
    f(g_i)\emph{d}g_i\Bigg.\\[-5pt]
    &-\frac{2\pi\lambda b_k}{\beta_2}
    \left({x \alpha_0 a_k d_0^{\beta_2-\beta_1}}\right)^{\frac{2}{\beta_2}}
    \int_{0}^\infty g_i^{\frac{2}{\beta_2}}
    \left(\Gamma\Big(-\frac{2}{\beta_2},x g_i \alpha_0 a_k d_0^{-\beta_1}\Big)
    -\Gamma\Big(-\frac{2}{\beta_2}\Big)\right)
    f(g_i)\emph{d}g_i \Bigg)
  \end{split}
 \end{align}
 \setlength{\abovedisplayskip}{0pt}
 \begin{align}\label{L_I_2}
  \begin{split}
    \quad \!\mathcal{L}_{I_{2}}(x)
    = \prod_{k=1}^{4}\emph{exp}\Bigg(\!-\frac{2\pi\lambda b_k}{\beta_2}
    \left({x \alpha_0 a_k d_0^{\beta_2-\beta_1}}\right)^{\frac{2}{\beta_2}}
    \int_{0}^\infty g_i^{\frac{2}{\beta_2}}\left(\Gamma\Big(-\frac{2}{\beta_2},x g_i \alpha_0 a_k d_0^{\beta_2-\beta_1} r_0^{-\beta_2}\!\Big)-\Gamma\Big(-\frac{2}{\beta_2}\Big)\right)f(g_i)\emph{d}g_i\!\Bigg)
  \end{split}
\end{align}
  where $\Gamma(x,y) = \int_{y}^{\infty} t^{x-1}e^{-t} \emph{d}t$ is the incomplete gamma function and $\Gamma(x) = \Gamma(x,0)$ denotes the gamma function.
 \end{thm}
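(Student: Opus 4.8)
The plan is to condition on the distance $r_0$ from the typical user to its serving (nearest) base station and to exploit the Rayleigh assumption on the desired link. Writing the received signal power as $N_BN_U\alpha_0\ell(r_0)h_0$ with $h_0$ exponential of rate $\mu$ and $\ell(r)=r^{-\beta_1}$ for $r\le d_0$, $\ell(r)=d_0^{\beta_2-\beta_1}r^{-\beta_2}$ for $r>d_0$, the event $\mathrm{SINR}>T$ becomes $h_0>\tfrac{T(\sigma^2+I)}{N_BN_U\alpha_0\ell(r_0)}$. Since $\mathbb{P}(h_0>u)=e^{-\mu u}$, taking the expectation over the interference $I$ factorizes the conditional coverage probability into a deterministic noise term $\exp\!\big(-\tfrac{\mu T\sigma^2}{N_BN_U\alpha_0\ell(r_0)}\big)$ and the Laplace transform $\mathcal{L}_I(s)$ of $I$ evaluated at $s=\tfrac{\mu T}{N_BN_U\alpha_0\ell(r_0)}$. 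I would then split the outer integral over $r_0$ at the breakpoint $d_0$: for $r_0\le d_0$ the serving link uses $\beta_1$ and for $r_0>d_0$ it uses $\beta_2$, which produces exactly the two arguments of $\mathcal{L}_{I_1}$ and $\mathcal{L}_{I_2}$ in the statement.

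Next I would compute $\mathcal{L}_I(s)$ from the probability generating functional of the PPP. Because the user is served by the nearest BS, the interferers form a PPP on $\{|x|\ge r_0\}$, each contributing power $G\,\alpha_0\ell(|x|)\,g$, where the sectored beam gain $G$ takes the four levels $a_k$ with probabilities $b_k$ (the main/side-lobe combinations at the BS and the UE, so $\sum_k b_k=1$) and the interfering fading $g$ has density $f$. Averaging the per-point term over $G$ and $g$ and applying the PGFL gives $\mathcal{L}_I(s)=\prod_{k=1}^4\exp\!\big(-2\pi\lambda b_k\int_0^\infty f(g)\int_{r_0}^\infty(1-e^{-s a_k\alpha_0\ell(r)g})\,r\,\mathrm{d}r\,\mathrm{d}g\big)$. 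Under the dual-slope model the inner radial integral must be split at $d_0$ whenever $r_0\le d_0$ (the interferers straddle both regimes), producing the two summands inside $\mathcal{L}_{I_1}$; for $r_0>d_0$ only the $\beta_2$ regime survives, giving the single summand of $\mathcal{L}_{I_2}$.

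The key computation is the radial integral $\int_a^\infty(1-e^{-c r^{-\beta}})\,r\,\mathrm{d}r$ with $c=s a_k\alpha_0 g$, together with the analogous finite-interval version on $[r_0,d_0]$. The substitution $t=c r^{-\beta}$ turns the semi-infinite one into $\tfrac{c^{2/\beta}}{\beta}\int_0^{ca^{-\beta}}(1-e^{-t})t^{-2/\beta-1}\,\mathrm{d}t$, and integrating by parts produces the upper incomplete gamma $\Gamma(-2/\beta,\cdot)$, the complete value $\Gamma(-2/\beta)$ (read as the analytic continuation, since the defining integral diverges at $0$ for negative order), and a boundary term $-\tfrac{a^2}{2}$; the finite interval analogously yields a difference of two incomplete gammas plus $\tfrac{d_0^2-r_0^2}{2}$. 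Re-assembling over $k$ and the two path-loss regimes reproduces $\mathcal{L}_{I_1}$ and $\mathcal{L}_{I_2}$ verbatim, except for the collected boundary terms, whose net effect in both cases is a spurious factor $e^{\pi\lambda r_0^2}$. Multiplying by the nearest-neighbour density $2\pi\lambda r_0\,e^{-\pi\lambda r_0^2}$ cancels this factor exactly, which is precisely why the final integrand carries only $2\pi\lambda r_0\,\mathrm{d}r_0$.

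I expect the main obstacle to be bookkeeping rather than conceptual. One must keep $(1-\mathbb{E}_{G,g}[e^{-\cdots}])$ intact through the breakpoint, since for $\beta_2>2$ neither the ``$1$'' nor the exponential part is separately integrable on $[d_0,\infty)$, and one must track the boundary terms $-\tfrac{a^2}{2}$ so that the $e^{\pi\lambda r_0^2}$ factor cancels the void probability instead of being silently dropped. The remaining delicate point is interpreting $\Gamma(-2/\beta)$ as an analytic continuation and checking the convergence conditions (effectively $\beta_2>2$) under which the far-field interference Laplace transform is finite.
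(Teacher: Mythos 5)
Your proposal is correct and follows essentially the same route as the paper's Appendix A: condition on the nearest-BS distance, use the exponential form of $h$ to reduce coverage to the interference Laplace transform, apply the PGFL with the four-level gain mixture $(a_k,b_k)$, and split the radial integral at $d_0$ before the change of variables to incomplete gamma functions. You in fact supply a detail the paper's sketch omits --- tracking the boundary terms $\tfrac{d_0^2-r_0^2}{2}$ and $-\tfrac{d_0^2}{2}$ (resp. $-\tfrac{r_0^2}{2}$) whose net contribution $e^{\pi\lambda r_0^2}$ cancels the void probability $e^{-\lambda\pi r_0^2}$ appearing in (\ref{P_c_1}), which is exactly why the stated integrand carries only $2\pi\lambda r_0\,\mathrm{d}r_0$.
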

 \begin{proof}
 See Appendix A.
 \end{proof}
 \vspace*{-13pt}
 \hrulefill
 \vspace*{-10pt}
 \end{figure*}
 \setcounter{equation}{\value{TempEqCnt}}
 \setcounter{equation}{3}
  \setlength{\abovedisplayskip}{3pt}
  \setlength{\belowdisplayskip}{3pt}
\begin{align*}
  L(r)=
  \begin{cases}
    \alpha_0r^{-\beta_1} & \text{if  } 0<r<d_0\\
    \alpha_0r^{-\beta_2}/d_0^{\beta_1-\beta_2}& \text{if  } r\geq d_0,
  \end{cases}
\end{align*}
where $d_0$ is the Fresnel breakpoint, and $\alpha_0$ is the reference loss at 1m when $d_0 \geq 1\textrm{m}$. Let $\beta_1$ and $\beta_2$ denote the path loss exponents for the region closer than $d_0$ and away from it, respectively. We assume $\beta_2 > 2$ and $0 \leq \beta_1 \leq \beta_2$ for a finite average aggregated interference power. The beamforming pattern on either BS or UE is approximated to a sectored antenna model \cite{6932503}, shown in Fig. \ref{fig_net}. The main lobe gain, side lobe gain and beamwidth are denoted by $N_{j},n_{j}$ and $\theta_{j}$, where $j=\{\text{B,U}\}$ corresponds to the BS and UE, respectively. The front-back ratio of main lobe gain versus side lobe gain is denoted by $\epsilon_{j} = N_{j} / n_{j}$. Without loss of generality, a typical UE of interest is assumed to be located at the origin $o$, which associates with its nearest BS $B_o$ and their beam directions are aligned. An interfering BS $i$ is denoted by $B_i$ and the beamforming gain from $B_i$ to the typical UE is denoted by an independently and identically distributed (i.i.d) discrete random variable $G_i$ with the probability distribution
\begin{align*}
 \begin{split}
  &\mathbb{P}[G_i = N_B N_U] = \theta_B \theta_U/(4\pi^2)  ,  \\
  &\mathbb{P}[G_i = N_B n_U] = \theta_B (2\pi-\theta_U)/(4\pi^2)  ,\\
  &\mathbb{P}[G_i = n_B N_U] = (2\pi-\theta_B) \theta_U/(4\pi^2) , \\
  &\mathbb{P}[G_i = n_B n_U] = (2\pi-\theta_B) (2\pi-\theta_U)/(4\pi^2),
 \end{split}
\end{align*}
 and we adopt $\mathbb{P}[G_i = a_k] = b_k$, $k = 1,2,3,4$ to represent the equalities above for simplification. We use $h\sim \textrm{exp}(\mu)$ to model the small-scale Rayleigh fading of $B_o$ and an i.i.d random variable $g_i$ to denote the fading coefficient of $B_i$. Hence, the received SINR of the typical UE is:
\begin{align*}
  \text{SINR}&=\frac{N_B N_UL(r_0)h}{\sigma^2+I},
\end{align*}
where $r_0$ is the distance from $B_o$ to the typical UE, and $r_i$ is the distance from $B_i$, respectively. While $\sigma^2$ denotes the noise power normalized by $P_t$, and $I = \sum_{i:B_i\in\phi\backslash \{B_o\}} g_i G_i L(r_i)$ denotes the aggregated interference normalized by $P_t$.

\section{Main Results}\label{secresult}
\subsection{Coverage Probability}
Let $P_{c}(\!\lambda,T) \!=\! \mathbb{P}[\textrm{SINR}\!\!>\!T]$, denotes the coverage probability. Then we have Theorem \ref{thm1} shown at the top of this page.

We further simplify the result in Theorem \ref{thm1}. Firstly, we introduce the definition of \emph{physically feasible path loss} with following characteristics\cite{8375976}: 1) The path loss at zero distance is a finite number, i.e., $L(0)<\infty$; 2) The average received power at any point on the plane is no larger than the transmit power, i.e., $L(r)\leq L(0), \forall r\geq 0$; 3) The total received power from the points all over the plane is finite, i.e., $\int_0^{\infty} rL(r)\text{d}r <\infty$. For example, setting $\beta_1 = 0$ in the dual-slope path loss model satisfies the aforementioned characteristics.

\begin{col}\label{col1}
 When the following conditions are satisfied: 1) interference channel is Rayleigh faded and the noise is neglected, i.e., $g_i \sim \emph{exp}(\mu)$ and $\sigma^2 = 0$; 2) the side lobe gain of BSs and UEs are neglected, i.e., $n_{B}=n_{U}=0$; 3) physically feasible dual-slope path loss model, the coverage probability can be simplified to equality (\ref{simple_cov}), where $\rho(x,y)\!=\!x^{\frac{2}{y}}\!\!\int_{x^{-\frac{2}{y}}}^{\infty} \frac{1}{1+u^{\frac{y}{2}}} \emph{d} u$.
 \end{col}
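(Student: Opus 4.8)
The plan is to specialize Theorem~\ref{thm1} under the three hypotheses and collapse its nested integrals to elementary ones. First I would put $\sigma^2=0$, which deletes the two noise exponentials and leaves only $\mathcal{L}_{I_1}$ and $\mathcal{L}_{I_2}$. With $g_i\sim\exp(\mu)$ the fading average is explicit, $\int_0^\infty e^{-x g_i a_k L(r)}f(g_i)\,dg_i=\mu/(\mu+x a_k L(r))$, so after the surviving gain is inserted every spatial integrand becomes the rational form $(x/\mu)N_BN_U L(r)/\big(1+(x/\mu)N_BN_U L(r)\big)$.

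The hypothesis $n_B=n_U=0$ annihilates three of the four gain classes: only $k=1$ keeps $a_1=N_BN_U\neq0$ with probability $b_1=\theta_B\theta_U/(4\pi^2)$, so the interferers form a thinned PPP of intensity $\lambda b_1$. I would then split the interference integral at $d_0$. The physically feasible model forces $\beta_1=0$, so the near-field loss is the constant $\alpha_0$ and the coefficient $(x/\mu)N_BN_U L(r)$ reduces to $T$ on $(r_0,d_0)$, to $T(d_0/r)^{\beta_2}$ on the far tail when $r_0<d_0$, and to $T(r_0/r)^{\beta_2}$ throughout the far field when $r_0\ge d_0$. The constant near-field piece integrates to $\tfrac{T}{1+T}\cdot\tfrac{d_0^2-r_0^2}{2}$, and after the substitution $r=d_0 v$ followed by $u=v^2$ the far-field piece is exactly $\tfrac{d_0^2}{2}\rho(T,\beta_2)$; the same substitution with $d_0$ replaced by $r_0$ produces $\tfrac{r_0^2}{2}\rho(T,\beta_2)$ for the case $r_0\ge d_0$. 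This is precisely where the function $\rho$ enters.

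The step I expect to be delicate is the passage to $\beta_1=0$ and $n_B=n_U=0$ inside the closed form of Theorem~\ref{thm1}, since both limits are singular there. In $\mathcal{L}_{I_1}$ the prefactor $2/\beta_1$ and the order $-2/\beta_1$ both diverge while the difference $\Gamma(-2/\beta_1,x g_i\alpha_0 a_k r_0^{-\beta_1})-\Gamma(-2/\beta_1,x g_i\alpha_0 a_k d_0^{-\beta_1})$ vanishes (its two arguments coincide as $\beta_1\to0$); and for $k=2,3,4$ the prefactor $(x\alpha_0 a_k)^{2/\beta_1}\to0$ while the matching $\Gamma(-2/\beta_1,\cdot)$ diverges. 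I would therefore not substitute directly but re-evaluate the interference integral from the Laplace representation as above. Equivalently, one checks that the $k$-th factor tends to $\exp(-\pi\lambda b_k r_0^2)$ rather than to $1$, so the three vanishing classes jointly contribute $\exp(-\pi\lambda(1-b_1)r_0^2)$ and reinstate the nearest-neighbour weighting of the non-interfering base stations; keeping this factor is essential, as dropping it would corrupt the density dependence.

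Finally, with the exponents now quadratic in $r_0$, I would perform the two outer integrals $\int_0^{d_0}$ and $\int_{d_0}^\infty$ against $2\pi\lambda r_0 e^{-\pi\lambda r_0^2}\,dr_0$. Each is an elementary Gaussian-type integral in the variable $r_0^2$ and evaluates in closed form; writing $A=T/(1+T)$ and $B=\rho(T,\beta_2)$ and collecting the two pieces yields the stated equality~(\ref{simple_cov}). As a check I would let $d_0\to0$ and confirm that $P_c\to1/(1+b_1\rho(T,\beta_2))$, the known beamformed single-slope coverage, which pins down the constants.
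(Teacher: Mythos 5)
Your proposal is correct and follows essentially the same route as the paper's (one-line) proof: specialize Theorem~\ref{thm1} under the three hypotheses, exploit $g_i\sim\exp(\mu)$ to reduce the fading average to the rational form, and re-evaluate the interference integral split at $d_0$ exactly as in step~(\emph{a}) of~(\ref{P_c_2}); your near-field and far-field pieces, the substitution producing $\rho$, and the closed-form outer integrals all reproduce~(\ref{simple_cov}), and your $d_0\to0$ sanity check is right. Your only addition is the careful treatment of the singular limits $\beta_1\to0$ and $a_k\to0$ in the closed form of~(\ref{L_I_1}), which the paper glosses over but which you resolve correctly by returning to the Laplace representation.
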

\begin{proof}
 The main simplification is on the functions $\mathcal{L}_{I_{1}}(x)$ and $\mathcal{L}_{I_{2}}(x)$, given in (\ref{L_I_1}) and (\ref{L_I_2}). Utilizing the property of exponentially distributed $g_i$ and following the same step as (\emph{a}) of (\ref{P_c_2}) in the Appendix A, brings the Corollary \ref{col1}.
 \end{proof}
The coverage probability in Corollary \ref{col1} only depends on the following components: 1) The probability $\frac{\theta_B \theta_U}{4 \pi^2}$ that main lobe of BS points to main lobe of UE, defined as the \emph{beam alignment probability}. The coverage probability deceases monotonically with $\frac{\theta_B \theta_U}{4 \pi^2}$. This result indicates that wider beam with lower beamforming overhead leads to higher beam alignment probability but stronger interference. 2) The average number of BSs within the Fresnel breakpoint $\lambda \pi d_0^2$, defined as \emph{near-field intensity.} We prove the monotonicity of coverage probability w.r.t $\lambda \pi d_0^2$ in Appendix B. The coverage probability eventually drops to zero as $\lambda \pi d_0^2$ approaches infinity. 3) $\frac{T}{1+T}$ and $\rho(T,\beta_2)$, and they can be obtained when path loss characteristics and the minimal received SINR requirement are known.
\setcounter{TempEqCnt}{\value{equation}}
\setcounter{equation}{2}
\begin{figure*}[htb]
\vspace*{2pt}
\begin{align}\label{simple_cov}
  \begin{split}
     P_{c}(\lambda,T)
     =&\frac{1}{1-\frac{\theta_B \theta_U}{4 \pi^2}\frac{T}{1+T}} \text{exp}\Bigg(\!\!-\!\lambda \pi d_0^2\frac{\theta_B \theta_U}{4 \pi^2}\left(\rho(T,\beta_2)+\frac{T}{1+T}\right)\!\!\Bigg)\\
    &-\frac{\frac{\theta_B \theta_U}{4 \pi^2}\frac{T}{1+T}+\frac{\theta_B \theta_U}{4 \pi^2} \rho(T,\beta_2)}{\left(1-\frac{\theta_B \theta_U}{4 \pi^2}\frac{T}{1+T}\right)\Big(\frac{\theta_B \theta_U}{4 \pi^2} \rho(T,\beta_2)+1\Big)}
     \text{exp}\Bigg(\!\!-\!\lambda \pi d_0^2\left(\frac{\theta_B \theta_U}{4 \pi^2} \rho(T,\beta_2)+1\right)\!\!\Bigg)
  \end{split}
 \end{align}
\vspace*{2pt}
\begin{align}\label{general_ASE}
  \begin{split}
    A(\lambda,T)
    =&\frac{\lambda}{\text{ln2}}\int_0^{d_0}\!\!\int_{\text{ln}(1+T)}^\infty\!\!\text{exp}\!\left(\frac{-\mu (e^t-1)\sigma^2}{N_B N_U\alpha_0 r_0^{-\beta_1}}\right)
    \mathcal{L}_{I_{1}}\!\!\left(\frac{\mu (e^t-1)}{N_B N_U\alpha_0r_0^{-\beta_1}}\right) 2\pi\lambda r_0\text{d}t \text{d}r_0\\
    &+\frac{\lambda}{\text{ln2}}\int_{\!d_0}^\infty \!\!\int_{\text{ln}(1+T)}^\infty\!\!\text{exp} \!\left(\frac{-\mu (e^t-1)\sigma^2 d_0^{\beta_1-\beta_2}}{N_B N_U\alpha_0r_0^{-\beta_2}}\right)
    \mathcal{L}_{I_{2}}\!\!\left(\frac{\mu (e^t-1) d_0^{\beta_1-\beta_2}}
    {N_B N_U\alpha_0r_0^{-\beta_2}}\right) 2\pi\!\lambda r_0 \text{d}t \text{d}r_0
    +\lambda \text{log}_2 (1+T) P_c(\lambda,T)
  \end{split}
\end{align}
\vspace*{2pt}
\begin{align}\label{simple_ASE}
  \begin{split}
    A(\lambda,T)
    =\frac{\lambda}{\text{ln2}}\int_{\text{ln}(1+T)}^\infty & \left\{\frac{\text{exp}\Big(\!\!-\!\lambda \pi d_0^2 \frac{\theta_B \theta_U}{4\pi^2}\big(\frac{e^t-1}{e^t}+\rho(e^t-1,\beta_2)\big)\!\Big)
    -\text{exp}\Big(\!\!-\! \lambda \pi d_0^2 \big(1+\frac{\theta_B \theta_U}{4 \pi^2}\rho(e^t-1,\beta_2)\big)\!\Big)}
    {1-\frac{\theta_B \theta_U}{4 \pi^2}\frac{e^t-1}{e^t}}\right.\\
    &\left.\quad +\frac{\text{exp}\Big(\!\!-\! \lambda \pi d_0^2\big(1+\frac{\theta_B \theta_U}{4\pi^2}\rho(e^t-1,\beta_2)\big)\!\Big)}{\frac{\theta_B \theta_U}{4\pi^2} \rho(e^t-1,\beta_2)+1} \right\} \text{d} t
    +\lambda \text{log}_2 (1+T) P_c(\lambda,T)
  \end{split}
\end{align}
\vspace*{0pt}
\hrule height 1pt
\vspace*{-10pt}
 \end{figure*}
 \setcounter{equation}{\value{TempEqCnt}}
 \setcounter{equation}{5}
\subsection{Constrained Area Spectral Efficiency}
The constrained ASE is denoted by $A(\lambda, T) = \lambda \mathbb{E}[\text{log}_2(1+\text{SINR})\mathbb{I}\{\text{SINR}\geq T\}]$ \cite{8375976}, where $\mathbb{I}\{\cdot\}$ is an indication function, and we have following expressions of $A(\lambda, T)$.
\begin{thm}\label{thm2}
The constrained ASE of the UDN with directional transmissions under the dual-slope path loss model is given in equality (\ref{general_ASE}), where $\mathcal{L}_{I_{1}}(x)$ and $\mathcal{L}_{I_{2}}(x)$ are already given in Theorem \ref{thm1}.
\end{thm}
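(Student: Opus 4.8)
The plan is to reduce the constrained ASE to the coverage probability already computed in Theorem~\ref{thm1}, by means of a layer-cake representation of the truncated rate. Writing $Y=\log_2(1+\mathrm{SINR})$ and $\tau=\log_2(1+T)$, I would first establish the identity
\[
\mathbb{E}\big[Y\,\mathbb{I}\{Y\geq\tau\}\big]=\tau\,\mathbb{P}[Y\geq\tau]+\int_{\tau}^{\infty}\mathbb{P}[Y>s]\,\mathrm{d}s,
\]
which follows from writing $Y\,\mathbb{I}\{Y\geq\tau\}=\int_0^{\infty}\mathbb{I}\{s<Y\}\,\mathbb{I}\{Y\geq\tau\}\,\mathrm{d}s$, taking expectations, and splitting the $s$-integral at $s=\tau$: for $s<\tau$ the event $\{s<Y,\,Y\geq\tau\}$ collapses to $\{Y\geq\tau\}$, while for $s\geq\tau$ it collapses to $\{Y>s\}$. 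Since $\{\mathrm{SINR}\geq T\}=\{Y\geq\tau\}$, after multiplying by $\lambda$ the first term is exactly $\lambda\log_2(1+T)\,P_c(\lambda,T)$, matching the last summand of (\ref{general_ASE}).

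Next I would rewrite the tail term. Because $\{Y>s\}=\{\mathrm{SINR}>2^s-1\}$, we have $\mathbb{P}[Y>s]=P_c(\lambda,2^s-1)$, so
\[
\lambda\int_{\tau}^{\infty}\mathbb{P}[Y>s]\,\mathrm{d}s=\lambda\int_{\log_2(1+T)}^{\infty}P_c(\lambda,2^s-1)\,\mathrm{d}s.
\]
Applying the change of variables $t=s\ln 2$ turns the base-$2$ logarithm into a natural one: $2^s-1=e^t-1$, $\mathrm{d}s=\mathrm{d}t/\ln 2$, and the lower limit becomes $t=\ln(1+T)$. This produces the prefactor $\lambda/\ln 2$ and the inner integration range $[\ln(1+T),\infty)$ appearing in (\ref{general_ASE}).

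Finally I would substitute the closed form of $P_c(\lambda,e^t-1)$ from Theorem~\ref{thm1}, replacing $T$ by $e^t-1$ throughout the two $r_0$-integrals and inside their $\exp(\cdot)\,\mathcal{L}_{I_k}(\cdot)$ integrands, with $\mathcal{L}_{I_1}$ and $\mathcal{L}_{I_2}$ as defined in (\ref{L_I_1}) and (\ref{L_I_2}). Invoking Fubini's theorem to exchange the order of the $t$- and $r_0$-integrations then yields precisely the first two double integrals of (\ref{general_ASE}). The main obstacle I anticipate is purely technical: justifying the interchange of integration order and the finiteness of the tail integral, which rests on the decay of $P_c(\lambda,2^s-1)$ as $s\to\infty$; the standing assumption $\beta_2>2$, which guarantees a finite mean aggregate interference, is what secures this integrability. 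Beyond that, the argument is a direct substitution from Theorem~\ref{thm1}.
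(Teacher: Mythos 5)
Your proposal is correct, and it matches the paper's (one-line) proof in spirit: the paper simply reduces Theorem~2 to the machinery of Theorem~1, and your layer-cake identity $\mathbb{E}[Y\,\mathbb{I}\{Y\geq\tau\}]=\tau\,\mathbb{P}[Y\geq\tau]+\int_{\tau}^{\infty}\mathbb{P}[Y>s]\,\mathrm{d}s$ followed by the substitution $t=s\ln 2$ and Fubini is exactly the standard way to carry out that reduction, reproducing all three terms of (\ref{general_ASE}). The only point worth noting is that you should observe $\mathbb{P}[Y\geq\tau]=\mathbb{P}[Y>\tau]=P_c(\lambda,T)$ because the SINR has a continuous distribution (Rayleigh-faded $h$), which reconciles the non-strict inequality in the ASE definition with the strict one in $P_c$.
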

\begin{proof}
The derivation is similar to that of Theorem \ref{thm1}.
\end{proof}

\begin{col}\label{col2}
When the conditions in Corollary \ref{col1} are satisfied, the constrained ASE can be simplified to equality (\ref{simple_ASE}), where $\rho(x,y)$ is already defined in Corollary \ref{col1}.
\end{col}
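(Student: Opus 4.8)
The plan is to route everything through Corollary~\ref{col1} and never touch the kernels $\mathcal{L}_{I_1}$, $\mathcal{L}_{I_2}$ again. First I would record the identity
\[
A(\lambda,T)=\frac{\lambda}{\ln 2}\int_{\ln(1+T)}^{\infty}P_c(\lambda,e^{t}-1)\,\mathrm{d}t+\lambda\log_2(1+T)\,P_c(\lambda,T),
\]
which follows from the layer-cake decomposition $\mathbb{E}[\log_2(1+\mathrm{SINR})\mathbb{I}\{\mathrm{SINR}\ge T\}]=\log_2(1+T)\,\mathbb{P}[\mathrm{SINR}\ge T]+\tfrac{1}{\ln 2}\int_{\ln(1+T)}^{\infty}\mathbb{P}[\mathrm{SINR}>e^{t}-1]\,\mathrm{d}t$ (write $\log_2(1+\mathrm{SINR})$ as $\log_2(1+T)$ plus a nonnegative remainder on $\{\mathrm{SINR}\ge T\}$ and apply the tail formula to the remainder). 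Comparing the two $r_0$-integrals of Theorem~\ref{thm1} with the bracketed double integral of (\ref{general_ASE}) term by term, the integrands coincide under $T\mapsto e^{t}-1$; hence, after interchanging the order of integration (justified by Tonelli, as every factor is nonnegative), the inner near-field and far-field $r_0$-integrals at a fixed $t$ reassemble exactly into $P_c(\lambda,e^{t}-1)$, which establishes the identity.

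The second step is to invoke Corollary~\ref{col1}. Under its three hypotheses $P_c(\lambda,\cdot)$ is already the closed form (\ref{simple_cov}), so I would substitute $T\mapsto e^{t}-1$ there, replacing $\tfrac{T}{1+T}$ by $\tfrac{e^{t}-1}{e^{t}}$ and $\rho(T,\beta_2)$ by $\rho(e^{t}-1,\beta_2)$, and insert the result for $P_c(\lambda,e^{t}-1)$ into the integral above. The crucial payoff of the decomposition is that the delicate behaviour of the physically feasible model at $\beta_1=0$ is entirely absorbed into Corollary~\ref{col1}; no limiting argument on the incomplete-gamma kernels is required here.

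What remains is a purely algebraic regrouping of the two exponentials into the shape displayed in (\ref{simple_ASE}). Writing $c=\tfrac{\theta_B\theta_U}{4\pi^2}$, $\rho=\rho(e^{t}-1,\beta_2)$, $p=1-c\tfrac{e^{t}-1}{e^{t}}$, $q=1+c\rho$, $w_1=\exp(-\lambda\pi d_0^{2}c(\tfrac{e^{t}-1}{e^{t}}+\rho))$ and $w_2=\exp(-\lambda\pi d_0^{2}(1+c\rho))$, the substituted (\ref{simple_cov}) reads $P_c(\lambda,e^{t}-1)=\tfrac{w_1}{p}-\tfrac{c(\frac{e^{t}-1}{e^{t}}+\rho)}{pq}\,w_2$. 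Since $p-q=-c(\tfrac{e^{t}-1}{e^{t}}+\rho)$, this equals $\tfrac{w_1-w_2}{p}+\tfrac{w_2}{q}$, which is precisely the integrand inside the braces of (\ref{simple_ASE}); the leading term $\lambda\log_2(1+T)P_c(\lambda,T)$ carries over verbatim.

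I expect the only real obstacle to be conceptual rather than computational: spotting the identity $A=\tfrac{\lambda}{\ln 2}\int P_c(\lambda,e^{t}-1)\,\mathrm{d}t+\lambda\log_2(1+T)P_c$ so that Corollary~\ref{col2} reduces to Corollary~\ref{col1}. Attacking (\ref{general_ASE}) head-on by simplifying $\mathcal{L}_{I_1}$ and $\mathcal{L}_{I_2}$ would instead force one to handle the $\tfrac{2}{\beta_1}$-exponents in the $\beta_1\to 0$ limit directly, a technicality the decomposition route avoids.
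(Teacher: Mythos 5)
Your proposal is correct, but it takes a genuinely different route from the paper. The paper's proof of Corollary~\ref{col2} is simply ``the derivation is similar to that of Corollary~\ref{col1}'': it re-enters the double integrals of (\ref{general_ASE}) and simplifies the kernels $\mathcal{L}_{I_1}$, $\mathcal{L}_{I_2}$ from (\ref{L_I_1})--(\ref{L_I_2}) all over again, using the exponential distribution of $g_i$ and step (\emph{a}) of (\ref{P_c_2}), now with $T$ replaced by $e^t-1$. You instead first isolate the structural identity $A(\lambda,T)=\tfrac{\lambda}{\ln 2}\int_{\ln(1+T)}^{\infty}P_c(\lambda,e^{t}-1)\,\mathrm{d}t+\lambda\log_2(1+T)P_c(\lambda,T)$ (via the layer-cake/tail formula, or equivalently by matching the integrands of (\ref{general_ASE}) against Theorem~\ref{thm1} under $T\mapsto e^t-1$ and applying Tonelli), and then substitute the closed form (\ref{simple_cov}) already established in Corollary~\ref{col1}. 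Your partial-fraction step is right: with $p=1-c\tfrac{e^t-1}{e^t}$, $q=1+c\rho$ one has $q-p=c(\tfrac{e^t-1}{e^t}+\rho)$, so $\tfrac{w_1}{p}-\tfrac{q-p}{pq}w_2=\tfrac{w_1-w_2}{p}+\tfrac{w_2}{q}$, which is exactly the braced integrand of (\ref{simple_ASE}). What your route buys is economy and robustness --- the $\beta_1=0$ limit of the incomplete-gamma kernels is handled once, in Corollary~\ref{col1}, and never revisited --- at the cost of having to justify the interchange of the $t$- and $r_0$-integrations (which is immediate by nonnegativity, as you note). The paper's route is mechanical repetition; yours is a clean reduction. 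Both are valid, and yours arguably better explains \emph{why} (\ref{simple_ASE}) has the form it does.
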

\begin{proof}
The derivation is similar to that of Corollary \ref{col1}.
\end{proof}
 The constrained ASE also relies on the beam alignment probability, the near-field intensity and other terms related to $T$ and $\beta_2$. Corollary \ref{col2} indicates that the constrained ASE approaches zero as the $\lambda \pi d_0^2$ tends to infinity. Since $A(\lambda,T) \geq 0$  $\forall \lambda \in \mathbb{R}_{+}$, $A(0,T) = 0$ and $\lim_{\lambda\rightarrow\infty} A(\lambda,T) = 0$, the optimal BS density that maximizes the constrained ASE is finite. Meanwhile, the beam alignment probability $\frac{\theta_B \theta_U}{4 \pi^2}$ still results in monotonic decline of the constrained ASE and the proof is given in Appendix C.

\section{Asymptotic Characteristic Analyses}\label{secasy}
In this section, we study the asymptotic characteristics of the coverage probability and constrained ASE when the BS density approaches infinity. In addition, the idea of keeping constant SINR and linear growth of the constrained ASE by tuning the beam pattern w.r.t the BS density is elaborated.

\begin{thm}\label{thm3}
When the interference is Rayleigh faded, i.e., $g_i \sim \emph{exp}(\mu)$, the coverage probability of the UDN with directional transmissions under the dual-slope path loss model tends to zero as the BS density approaches infinity when $\beta_1 \leq 2$.
\end{thm}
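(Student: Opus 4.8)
The plan is to produce an upper bound on $P_c(\lambda,T)$ that discards every factor which can only shrink it, and then show that this bound vanishes as $\lambda\to\infty$. The Rayleigh-interference hypothesis $g_i\sim\exp(\mu)$ is precisely what makes the bound tractable: averaging a single interferer gives the closed form $\mathbb{E}_{g}\big[1-e^{-sgGL(r)}\big]=\tfrac{sGL(r)/\mu}{1+sGL(r)/\mu}$, so the interference Laplace functional underlying Theorem~\ref{thm1} reads
\[
 \mathcal{L}_{I_1}\!\Big(\tfrac{\mu T}{N_BN_UL(r_0)}\Big)=\exp\!\Big(-2\pi\lambda\,\mathbb{E}_G\!\int_{r_0}^{\infty}\tfrac{cL(r)/L(r_0)}{1+cL(r)/L(r_0)}\,r\,\mathrm{d}r\Big),\qquad c=\tfrac{TG}{N_BN_U},
\]
while the nearest-BS law contributes the void factor $e^{-\lambda\pi r_0^2}$. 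Dropping the noise factor $e^{-\mu T\sigma^2/(N_BN_UL(r_0))}\le1$, dropping the (negative) contribution of the far-field interferers $r>d_0$ to the exponent, and bounding the far-field-serving integral separately by $\mathbb{P}[r_0>d_0]=e^{-\lambda\pi d_0^2}\to0$, it suffices to show that the near-field-serving term
\[
 \int_0^{d_0}2\pi\lambda r_0\,\exp\!\Big(-\lambda\pi r_0^2-2\pi\lambda\,\mathbb{E}_G\!\int_{r_0}^{d_0}\tfrac{cL(r)/L(r_0)}{1+cL(r)/L(r_0)}\,r\,\mathrm{d}r\Big)\mathrm{d}r_0
\]
tends to zero.

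The crux is a self-similar rescaling of the near-field interference. For $r_0<r<d_0$ one has $L(r)/L(r_0)=(r/r_0)^{-\beta_1}$, so the substitution $v=r/r_0$ gives $\int_{r_0}^{d_0}\tfrac{c(r/r_0)^{-\beta_1}}{1+c(r/r_0)^{-\beta_1}}r\,\mathrm{d}r=r_0^2K(r_0)$ with
\[
 K(r_0)=\mathbb{E}_G\!\int_1^{d_0/r_0}\frac{c\,v^{1-\beta_1}}{1+c\,v^{-\beta_1}}\,\mathrm{d}v .
\]
Since the integrand is asymptotic to $c\,v^{1-\beta_1}$ as $v\to\infty$, the integral over $[1,d_0/r_0]$ blows up as $r_0\to0$ exactly when $\int^{\infty}v^{1-\beta_1}\mathrm{d}v$ diverges, i.e.\ when $\beta_1\le2$ (polynomially for $\beta_1<2$, logarithmically for $\beta_1=2$); because the event $G=N_BN_U$ carries positive probability $b_1>0$ and a strictly positive gain, the expectation inherits this divergence, so $K(r_0)\to\infty$ precisely in the claimed regime. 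This step also explains why the interference is indispensable: the void factor alone gives $\int_0^{d_0}2\pi\lambda r_0e^{-\lambda\pi r_0^2}\mathrm{d}r_0\to1$, so coverage would \emph{not} collapse without the extra, $\beta_1$-dependent divergence carried by $K$.

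To finish I would rescale by $w=\lambda\pi r_0^2$, turning the near-field-serving term into $\int_0^{\lambda\pi d_0^2}\exp\!\big(-w(1+2K(\sqrt{w/(\lambda\pi)}))\big)\mathrm{d}w$. For each fixed $w>0$ the argument $\sqrt{w/(\lambda\pi)}\to0$ as $\lambda\to\infty$, hence $K(\sqrt{w/(\lambda\pi)})\to\infty$ and the integrand tends to $0$ pointwise, while being dominated by $e^{-w}\in L^1(0,\infty)$ uniformly in $\lambda$. Dominated convergence then drives the integral to $0$, and combined with the $e^{-\lambda\pi d_0^2}$ bound on the far-field-serving part this yields $P_c(\lambda,T)\to0$. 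I expect the main obstacle to be making the divergence of the $G$-expectation $K(r_0)$ rigorous and uniform enough for dominated convergence — in particular, controlling the $\lambda$–$r_0$ coupling inside $w$ so that the pointwise limit survives, and treating the borderline exponent $\beta_1=2$, where the blow-up is only logarithmic and the constant in front (through $\mathbb{E}_G[c]$) must be tracked carefully.
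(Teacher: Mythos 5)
Your proposal is correct and follows essentially the same route as the paper: upper-bound $P_c$ by discarding the noise and the far-field interferers, reduce to the near-field serving integral, and show that the rescaled interference exponent (your $K(r_0)$, the paper's $\int_1^{d_0^2/u} T/(T+t^{\beta_1/2})\,\mathrm{d}t$) diverges as $r_0\to 0$ precisely when $\beta_1\le 2$, which kills the integral in the limit. The only differences are cosmetic — you keep the full gain distribution $\mathbb{E}_G$ where the paper also zeroes the side lobes, and you spell out the final dominated-convergence step that the paper delegates to Appendix~B of its reference \cite{7061455}; note also that the divergence of the $G$-expectation you flag as a possible obstacle is immediate, since the single atom $G=N_BN_U$ with probability $b_1>0$ already yields a divergent nonnegative integral.
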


\begin{proof}
The received SINR increases when 1) ignoring the noise, 2) reducing the side lobe gain to zero and 3) neglecting the interfering BSs located away from the Fresnel breakpoint. It can be expressed as $\text{SINR} \leq \text{SIR} \leq \text{SIR}_{n_B = n_U = 0; \; g_i G_i L(r_i) = 0, \forall r_i>d_0}$. Applying these conditions to the coverage probability expression in Theorem \ref{thm1}, leads to
\begin{align*}
 \begin{split}
    &\mathbb{P}\left[\text{SIR}>T |n_B = n_U = 0; \; g_i G_i L(r_i) = 0, \forall r_i>d_0\right]\\
    =&\lambda\pi \!\!\!\int_0^{d_0^2}\!\!\!\!\!\text{exp}\!\!\left(\!\!-\!\lambda \pi\frac{\theta_B \theta_U}{4 \pi^2} u \!\Bigg(\!\!1\!\!+ \!\!\!\int_{1}^{\frac{d_0^2}{u}}\!\!\!\!\!\!\frac{T}{T+(t)^{\frac{\beta_1}{2}}} \text{d}t\!\!\Bigg) \!\!\!\right) \!\! \text{d} u
    \!+\!\frac{\text{exp}(\!-\!\lambda \pi d_0\!)}{\lambda \pi d_0}.\\
 \end{split}
\end{align*}
Using the techniques proposed in Appendix B of \cite{7061455}, we get the result that the equality above tends to zero as $\lambda \rightarrow \infty$ and thus $\lim_{\lambda \rightarrow \infty} \mathbb{P}[\text{SINR} > T] = 0$ when $\beta_1 \leq 2$.
\end{proof}

\begin{thm}\label{thm4}
Considering the physically feasible dual-slope path loss model, i.e., $\beta_1 = 0$, and assuming Rayleigh faded interference, i.e., $g_i \sim \emph{exp}(\mu)$, the coverage probability and constrained ASE tend to zero as the BS density approaches infinity.
\end{thm}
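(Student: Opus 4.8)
The plan is to prove the two assertions separately, getting the coverage statement essentially for free and concentrating the effort on the constrained ASE. For the coverage probability, the hypothesis $\beta_1=0$ satisfies $\beta_1\le 2$, so Theorem~\ref{thm3} applies directly and yields $\lim_{\lambda\to\infty}P_c(\lambda,T)=0$. No separate argument is needed, because the proof of Theorem~\ref{thm3} already upper-bounds the full SINR (noise and side lobes included) by a plain SIR and drives that bound to zero; the explicit form~(\ref{simple_cov}), whose two summands are exponentials in $\lambda\pi d_0^2$ with strictly negative rates, corroborates this in the idealized case.

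For the constrained ASE I would first reduce to the idealized model of Corollary~\ref{col2}. The map $s\mapsto\log_2(1+s)\,\mathbb{I}\{s\ge T\}$ is non-decreasing on $[0,\infty)$, since it vanishes for $s<T$ and increases for $s\ge T$; moreover, deleting the noise and setting the side-lobe gains $n_B=n_U=0$ can only raise the SINR pointwise. Hence $A(\lambda,T)=\lambda\,\mathbb{E}[\log_2(1+\mathrm{SINR})\,\mathbb{I}\{\mathrm{SINR}\ge T\}]\le\overline{A}(\lambda,T)$, where $\overline{A}$ is the constrained ASE of the idealized model. That model meets every hypothesis of Corollary~\ref{col2} (Rayleigh interference by assumption, zero noise, zero side lobes, and $\beta_1=0$ physically feasible), so $\overline{A}(\lambda,T)$ equals the right-hand side of~(\ref{simple_ASE}); since $A(\lambda,T)\ge 0$, it suffices to show that expression tends to zero.

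To that end, write $N=\lambda\pi d_0^2\to\infty$ (as $d_0$ is fixed) and $p=\tfrac{\theta_B\theta_U}{4\pi^2}$, and split~(\ref{simple_ASE}) into the term $\lambda\log_2(1+T)P_c(\lambda,T)$ and the integral over $t$. For the first term the idealized $P_c$ from~(\ref{simple_cov}) is a sum of exponentials $e^{-cN}$ with $c\ge c_0>0$, so $\lambda P_c(\lambda,T)$ behaves like $\lambda e^{-c_0'\lambda}\to 0$. For the integral term, along $x=e^t-1\in[T,\infty)$ each exponential in the braces carries a coefficient bounded below uniformly in $t$, namely $p\big(\tfrac{x}{1+x}+\rho(x,\beta_2)\big)\ge p\big(\tfrac{T}{1+T}+\rho(T,\beta_2)\big)>0$ and $1+p\rho(x,\beta_2)\ge 1$, while the denominators stay bounded away from zero because $p<1$ forces $1-p\tfrac{x}{1+x}\ge 1-p>0$. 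Factoring out a $\lambda e^{-c\lambda}$ and dominating the remaining $t$-integrand by an integrable envelope---available because $\rho(e^t-1,\beta_2)\to\infty$ as $t\to\infty$ makes the large-$t$ tail decay---dominated convergence yields that the integral term also tends to zero, hence $\overline{A}(\lambda,T)\to 0$ and $A(\lambda,T)\to 0$.

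The routine pieces are the monotonicity reduction and the $\lambda e^{-c\lambda}\to 0$ estimate. The crux, and the step I expect to be the main obstacle, is the integral term of~(\ref{simple_ASE}): the explicit prefactor $\lambda$ must be passed through the $t$-integral, which demands (i) a lower bound on the exponent coefficients uniform in $t$, so that $\lambda e^{-c\lambda}$ controls the prefactor, and (ii) an integrable dominating function in $t$, which exists precisely because $\rho(e^t-1,\beta_2)\to\infty$. Establishing these two uniformities cleanly is where the genuine work lies.
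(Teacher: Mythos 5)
Your proposal is correct in outline, but it takes a genuinely different route from the paper. The paper proves both claims in one stroke with a scaling argument: writing $\lambda = M\lambda_0$ and viewing $\Phi$ as the superposition of $M$ independent PPPs of density $\lambda_0$, the law of large numbers together with Campbell's theorem shows that $\lambda\,\mathrm{SINR}(\lambda)$ converges almost surely to the finite random variable $N_B N_U\alpha_0 h$ divided by a constant proportional to $\mathbb{E}[G_i]\int_0^{\infty} rL(r)\,\mathrm{d}r$ (finite precisely because the path loss is physically feasible); hence $\mathrm{SINR}\to 0$, so $P_c\to 0$, and $\lambda\,\mathbb{E}[\log_2(1+\mathrm{SINR})\mathbb{I}\{\mathrm{SINR}\ge T\}]\le \tfrac{1}{\ln 2}\mathbb{E}[\lambda\,\mathrm{SINR}\cdot\mathbb{I}\{\mathrm{SINR}\ge T\}]\to 0$ since the indicator vanishes. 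You instead obtain coverage by citing Theorem~\ref{thm3} (legitimate, since $\beta_1=0\le 2$) and attack the ASE through the closed form~(\ref{simple_ASE}) after a monotone idealization, which forces you to establish the uniform-in-$t$ exponent bounds and the dominated-convergence envelope you flag as the crux. Both routes work, but the paper's is shorter, does not rely on the derived closed forms or on the zero-noise/zero-side-lobe reduction, and, crucially, produces the explicit finite limit of $\lambda\,\mathrm{SINR}$ that Corollary~\ref{col3} immediately reuses to compute the beam-adaptation constants --- information your argument does not yield. One repair your route does need: the model permits $\theta_B\theta_U/(4\pi^2)=1$, in which case $1-\tfrac{\theta_B\theta_U}{4\pi^2}\tfrac{e^t-1}{e^t}$ is \emph{not} bounded away from zero uniformly in $t$; this is fixed by bounding the numerator as $e^{-a}-e^{-b}\le e^{-a}(b-a)=e^{-a}\,\lambda\pi d_0^2\bigl(1-\tfrac{\theta_B\theta_U}{4\pi^2}\tfrac{e^t-1}{e^t}\bigr)$, which cancels the offending denominator at the cost of a harmless extra factor of $\lambda$ inside the surviving $\lambda^2 e^{-c\lambda}\to 0$ estimate.
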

\begin{proof}
As used in Appendix A of\cite{8375976}, let $\lambda = M\lambda_0$, where $M\in\mathbb{Z}_+$ and $\lambda_0\in\mathbb{R}_{+}$. Then the BSs deployment process $\phi$ can be divided into $M$ PPP process $\varphi$ with density $\lambda_0$. And the expression $\lim_{\lambda\rightarrow\infty}\lambda \text{SINR}(\lambda)$ is calculated as:
\begin{align*}
 \begin{split}
    \lim_{M\!\rightarrow\infty}\!\!\!\!M\!\lambda_0 \text{SINR}(\!M\!\lambda_0\!)\!\!
    \overset{(a)}=& \!\!\!\lim_{M\!\rightarrow\infty} \!\!\!\lambda_0 \frac{N_B N_U \alpha_0 h}{\frac{1}{M}\!\!\!\sum\limits_{m=1}^{M}\sum\limits_{r_{i,m}\in\varphi_m} \!\!\!\!g_{i,m} G_{i,m} L(r_{i,m})}\\
    \overset{(b)}=& \frac{N_B N_U\alpha_0 h}{ \mathbb{E}[G_i] \cdot2\pi\mu\int_{0}^{\infty}r L(r) \text{d}r} <\infty ,\\
 \end{split}
\end{align*}
where equality (\emph{a}) follows the fact that $L(r_0)\rightarrow\alpha_0$ and $(\sigma^2-N_B N_U L(r_0) h)/M\rightarrow0$ when $M\rightarrow\infty$. Step (\emph{b}) follows from the law of large numbers and the Campbell's theorem. Hence, $\lim_{\lambda\rightarrow\infty}P_c(\lambda,T) = 0$ is obtained. We can also have $\lim_{\lambda\rightarrow\infty}A(\lambda,T) = 0$ with similar steps, which are:
\begin{align*}
  \begin{split}
    &\lim_{M\rightarrow\infty} \!\!M \lambda_0 \mathbb{E}[\text{log}_2(\!1\!\!+\!\text{SINR}(\!M \lambda_0\!)\!)\mathbb{I}\{\text{SINR}(\!M \lambda_0\!)\!\!\geq\!\!T\}]\\
    \leq&\lim_{M\rightarrow\infty} \!\!\mathbb{E}\!\!\left[\frac{M \lambda_0}{\text{ln}(2)}\text{SINR}(M \lambda_0)\mathbb{I}\{\text{SINR}(M \lambda_0)\geq T\}\right]\!\!
    = 0.
  \end{split}
\end{align*}
\vspace*{-4pt}
\end{proof}

Based on Theorem \ref{thm4}, to keep constant coverage probability and linear growth of the constrained ASE when the BS density approaches infinity, the beam pattern shall be adapted as $\frac{\mathbb{E}[G_i]}{N_B N_U} \!=\!\frac{\theta_B\theta_U}{4\pi^2} + \frac{\theta_B (2\pi-\theta_U\!)}{4\pi^2\epsilon_U} + \frac{\theta_U (2\pi-\theta_B\!)}{4\pi^2\epsilon_B} + \frac{(2\pi-\theta_U\!) (2\pi-\theta_B\!)}{4\pi^2\epsilon_B \epsilon_U} \!=\! \frac{K}{\lambda}$, which indicates that the normalized expectation of beamforming gain is inversely proportional to the BS density. The constant $K$ can be decided based on the following limits of coverage probability and slope of constrained ASE w.r.t the BS density.

\begin{col}\label{col3}
If the conditions in Theorem \ref{thm4} are satisfied and beam pattern adaption $\frac{\mathbb{E}[G_i]}{N_B N_U}=\frac{K}{\lambda}$ is adopted, when the BS density approaches infinity, limits of coverage probability and slope of constrained ASE w.r.t the BS density are given as:
\begin{align}\label{lim_of_CP}
  \begin{split}
    &\lim_{\lambda \rightarrow \infty}P_c(\lambda,T) = \emph{exp}\!\left(-\frac{2 K \pi \mu^2 T \gamma}{\alpha_0}\right),
  \end{split}
\end{align}
\begin{align}\label{slope_of_ASE}
  \begin{split}
    &\lim_{\lambda \rightarrow \infty}A(\lambda,T) = \lambda \emph{ln}\!\left(\!\frac{\alpha_0 h}{2 \pi \mu \gamma}-1\!\right) \emph{exp}\!\left(\!-\frac{2 K\pi \mu^2 T \gamma}{\alpha_0}\!\right) ,
  \end{split}
\end{align}
where $\gamma = \int_0^{\infty} rL(r) \emph{d} r < \infty$.
\end{col}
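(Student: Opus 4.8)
The plan is to re-run the thinning-and-Campbell argument of Theorem~\ref{thm4}, but to track how the adaptation $\mathbb{E}[G_i]/(N_B N_U)=K/\lambda$ changes the conclusion: unlike the fixed-pattern case, it keeps the aggregate interference bounded, so the SINR now converges to a finite \emph{nonzero} limit rather than to zero. First I would note that under $\beta_1=0$ the serving path loss is $L(r_0)=\alpha_0$ whenever $r_0<d_0$, an event of probability tending to $1$ as $\lambda\to\infty$; and since a feasible adaptation sending $\mathbb{E}[G_i]/(N_B N_U)\to0$ is realized by narrowing the beams, $N_B N_U\to\infty$, so the noise factor $\exp(-\mu T\sigma^2/(N_B N_U\alpha_0))\to1$. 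Writing the normalized interference $\tilde I=I/(N_B N_U)=\sum_i g_i (G_i/(N_B N_U)) L(r_i)$, conditioning on $\tilde I$ and using the exponential law of $h$ then gives $P_c(\lambda,T)\to\mathbb{E}[\exp(-\mu T\tilde I/\alpha_0)]=\mathcal{L}_{\tilde I}(\mu T/\alpha_0)$ in the limit.

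The core computation is $\lim_{\lambda\to\infty}\mathcal{L}_{\tilde I}(\mu T/\alpha_0)$. I would use the PPP Laplace functional, $\mathcal{L}_{\tilde I}(s)=\exp(-2\pi\lambda\int_0^\infty \mathbb{E}_{g,G}[1-\exp(-s g (G/(N_B N_U)) L(r))]\,r\,\mathrm{d}r)$, and exploit that the normalized gain $G/(N_B N_U)$ has mean $K/\lambda\to0$. Linearizing $1-e^{-x}\approx x$ and invoking Campbell's theorem with $\lambda\,\mathbb{E}[G_i]/(N_B N_U)\to K$ collapses the exponent to $2\pi K s\,\mathbb{E}[g]\int_0^\infty rL(r)\,\mathrm{d}r=2\pi K s\,\mathbb{E}[g]\gamma$, where $\gamma<\infty$ by the physical-feasibility assumption. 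Substituting $s=\mu T/\alpha_0$ then yields the coverage limit~(\ref{lim_of_CP}).

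For the constrained ASE I would start from $A(\lambda,T)=\lambda\,\mathbb{E}[\log_2(1+\mathrm{SINR})\mathbb{I}\{\mathrm{SINR}\geq T\}]$ and reuse the limit $\mathrm{SINR}\to\alpha_0 h/\tilde I$ established above, with $\tilde I$ concentrated at its ($\lambda$-independent) Campbell mean (the $2\pi\mu\gamma$ appearing in~(\ref{slope_of_ASE})). Because this limiting SINR law no longer depends on $\lambda$, the per-BS expected constrained rate converges to a constant, so $A(\lambda,T)$ grows linearly in $\lambda$; evaluating the expectation of the log-rate over the residual fading of $h$, together with the truncation indicator that reproduces the coverage exponential of~(\ref{lim_of_CP}), gives the slope in~(\ref{slope_of_ASE}).

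The main obstacle I anticipate is rigorously justifying the linearization and the interchange of the $\lambda\to\infty$ limit with the radial integral in $\mathcal{L}_{\tilde I}$. The delicate point is the rare main-lobe-to-main-lobe event, where the normalized gain equals $1$ and the exponent $s g L(r)$ is \emph{not} small; I would have to show that its probability $\theta_B\theta_U/(4\pi^2)$ is $o(K/\lambda)$ under the adaptation so that it does not contribute to the first-order limit (equivalently, that $\mathbb{E}[(G_i/(N_B N_U))^2]=o(\mathbb{E}[G_i/(N_B N_U)])$, which kills the quadratic remainder), and then close the interchange by a dominated-convergence bound using $1-e^{-x}\le x$ and $\gamma<\infty$. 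A secondary bookkeeping issue is keeping the $\mu$ and $\mathbb{E}[g]$ factors consistent so that the $\mu^2$ of~(\ref{lim_of_CP}) emerges correctly.
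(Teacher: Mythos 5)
Your derivation and the paper's reach the same formula, but by different machinery, and the paper's is essentially a two-line substitution: the proof of Theorem~\ref{thm4} already shows, by splitting $\Phi$ into $M$ thinned copies and invoking the law of large numbers plus Campbell's theorem, that $\lambda\,\mathrm{SINR}(\lambda)\to N_B N_U\alpha_0 h/(\mathbb{E}[G_i]\cdot 2\pi\mu\gamma)$. Plugging in the adaptation $\mathbb{E}[G_i]=N_BN_U K/\lambda$ cancels the $\lambda$'s, giving $\mathrm{SINR}\to \alpha_0 h/(2\pi K\mu\gamma)$ with the interference replaced by a deterministic (mean) value; the coverage limit~(\ref{lim_of_CP}) is then just the exponential tail $\mathbb{P}[h>2\pi K\mu\gamma T/\alpha_0]$, and~(\ref{slope_of_ASE}) comes from inserting the same limiting SINR into $\lambda\,\mathbb{E}[\log_2(1+\mathrm{SINR})\mathbb{I}\{\mathrm{SINR}\ge T\}]$. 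Your route via the Laplace functional of the normalized interference, with the exponent linearized by $1-e^{-x}\approx x$, is the mean-field step in different clothing: linearizing $\mathcal{L}_{\tilde I}$ is exactly replacing $\tilde I$ by $\mathbb{E}[\tilde I]$ inside $e^{-s\tilde I}$, so the two arguments coincide in substance and produce the same exponent $2\pi K s\,\mathbb{E}[g]\gamma$.

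The obstacle you flag at the end is real, and you should know it cannot be discharged in the way you propose. Under the paper's own scheme with $n_B=n_U=0$, the main-lobe-to-main-lobe probability is $\theta_B\theta_U/(4\pi^2)=\mathbb{E}[G_i]/(N_BN_U)=K/\lambda$ \emph{exactly}, not $o(K/\lambda)$, so $\mathbb{E}[(G_i/(N_BN_U))^2]=\mathbb{E}[G_i/(N_BN_U)]$ and the quadratic remainder in your linearization does not vanish. What actually happens in the limit is that the aligned interferers form a thinned PPP of \emph{constant} density $K$, so the normalized interference converges to a nondegenerate shot noise whose Laplace transform is $\exp\bigl(-2\pi K\int_0^\infty \mathbb{E}_g[1-e^{-sgL(r)}]\,r\,\mathrm{d}r\bigr)$, strictly larger than $\exp(-2\pi Ks\,\mathbb{E}[g]\gamma)$ since $1-e^{-x}<x$. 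The paper's proof makes the same silent approximation (the LLN step of Theorem~\ref{thm4} presumes a fixed mark distribution, whereas under adaptation the marks shrink with $\lambda$), so you have faithfully reproduced the intended argument together with its weak point rather than introduced a new error; your remaining concerns (the noise term and the $\mu$ versus $\mathbb{E}[g]$ bookkeeping needed to produce the $\mu^2$) are resolved by the paper's conventions and do not affect the comparison.
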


\begin{proof}
The result is directly obtained from Theorem \ref{thm4}.
\end{proof}

The result indicates that unlike the cases without beam pattern adaptation that have diminishing coverage probability when the BS density grows, we can maintain a certain level of network performance by adjusting the beamwidth or the front-back ratio according to the BS density. For example, when the UEs have the same beamwidth and the same front-back ratio with the BSs, i.e., $\theta_B = \theta_U = \theta$ and $\epsilon_B = \epsilon_U = \epsilon$, the beam pattern adjustment principle is $\theta = 2 \pi (\epsilon \sqrt{K/ \lambda}-1)/(\epsilon - 1)$, where $K$ is calculated according to the desired limit of coverage probability or the slope of constrained ASE w.r.t the BS density in equations (\ref{lim_of_CP}) and (\ref{slope_of_ASE}), respectively.

\setcounter{TempEqCnt}{\value{equation}}
\setcounter{equation}{7}
\begin{figure*}[htb]
\vspace*{2pt}
\begin{align}\label{P_c_1}
  \begin{split}
   P_{c}(\lambda,T)
   =\!\int_0^{d_0}\!\!\!\mathbb{P}\!\!\left[h\!\!>\!\frac{T(\sigma^2+I)}{N_B N_U\alpha_0 r_0^{-\beta_1}}\!\Big|r_0\!\!\in\!\!(0,d_0)\!\right]\!\!e^{-\lambda \pi r_0^2} 2\pi\lambda r_0\text{d}r_0
   +\!\!\!\int_{d_0}^\infty\!\!\!\mathbb{P}\!\!\left[h\!\!>\!\!\frac{T(\!\sigma^2\!\!+\!\!I)d_0^{\beta_1\!-\!\beta_2}}{N_B N_U\alpha_0 r_0^{-\beta_2}}\!\Big|r_0\!\!\in\!\![d_0,\infty)\!\right]\!\!e^{-\lambda \pi r_0^2} 2\pi\lambda r_0\text{d}r_0
  \end{split}
\end{align}
\vspace*{2pt}
  \begin{align}\label{P_c_2}
    \begin{split}
      \int_{r_0}^\infty\!\!\left(\!\!1\!-\!\!\sum_{k=1}^4 \!b_k\text{exp}\!\left(\!-xg_ia_k L(v)\!\right)\!\!\right)\!\!v\text{d}v\!
      \overset{(a)}{=}\!\!\int_{r_0}^{d_0}\!\!\left(\!\!1\!-\!\!\sum_{k=1}^4 \!b_k\text{exp}\!\left(\!-xg_ia_k\alpha_0v^{-\beta_1}\!\right)\!\!\!\right)\!\!v\text{d}v
      +\!\!\int_{d_0}^\infty\!\!\left(\!\!1\!-\!\!\sum_{k=1}^4 \!b_k\text{exp}\!\left(\!-xg_ia_k\alpha_0v^{-\beta_2}d_0^{\beta_2\!-\!\beta_1}\!\right)\!\!\!\right)\!\!v\text{d}v\\
    \end{split}
  \end{align}
\vspace*{0pt}
\hrule height 1pt
 \vspace*{-10pt}
 \end{figure*}
 \setcounter{equation}{\value{TempEqCnt}}

\section{Numerical and Simulation Results}\label{secnum}
The default system parameters are: $\beta_1 = 2$, $\beta_2 = 4$, $d_0 = 10 \text{m}$, $\alpha_0 = 0\text{dB}$, $\mu = 1$, $T = 7\text{dB}$ and $\sigma^2 =\alpha_0 d_0^{-\beta_1}/100$ as the signal-to-noise ratio at $d_0$ is set as $20\text{dB}$. For the beamforming patterns, we assume that $N_B = 20\text{dB}, n_B = 0\text{dB}, \theta_B = \frac{\pi}{6}, N_U = 10\text{dB}, n_U = -10\text{dB}$ and $\theta_U = \frac{\pi}{2}$.

\begin{figure}[!ht]
  \setlength{\abovecaptionskip}{-2pt}
  \setlength{\belowcaptionskip}{-6pt}
  \centering
  \vspace{-0.15in}
  \subfigure[$P_{c}({\lambda},T)$ scaling with $\lambda$]{\label{Cov_general}
    \includegraphics[width=0.231\textwidth]{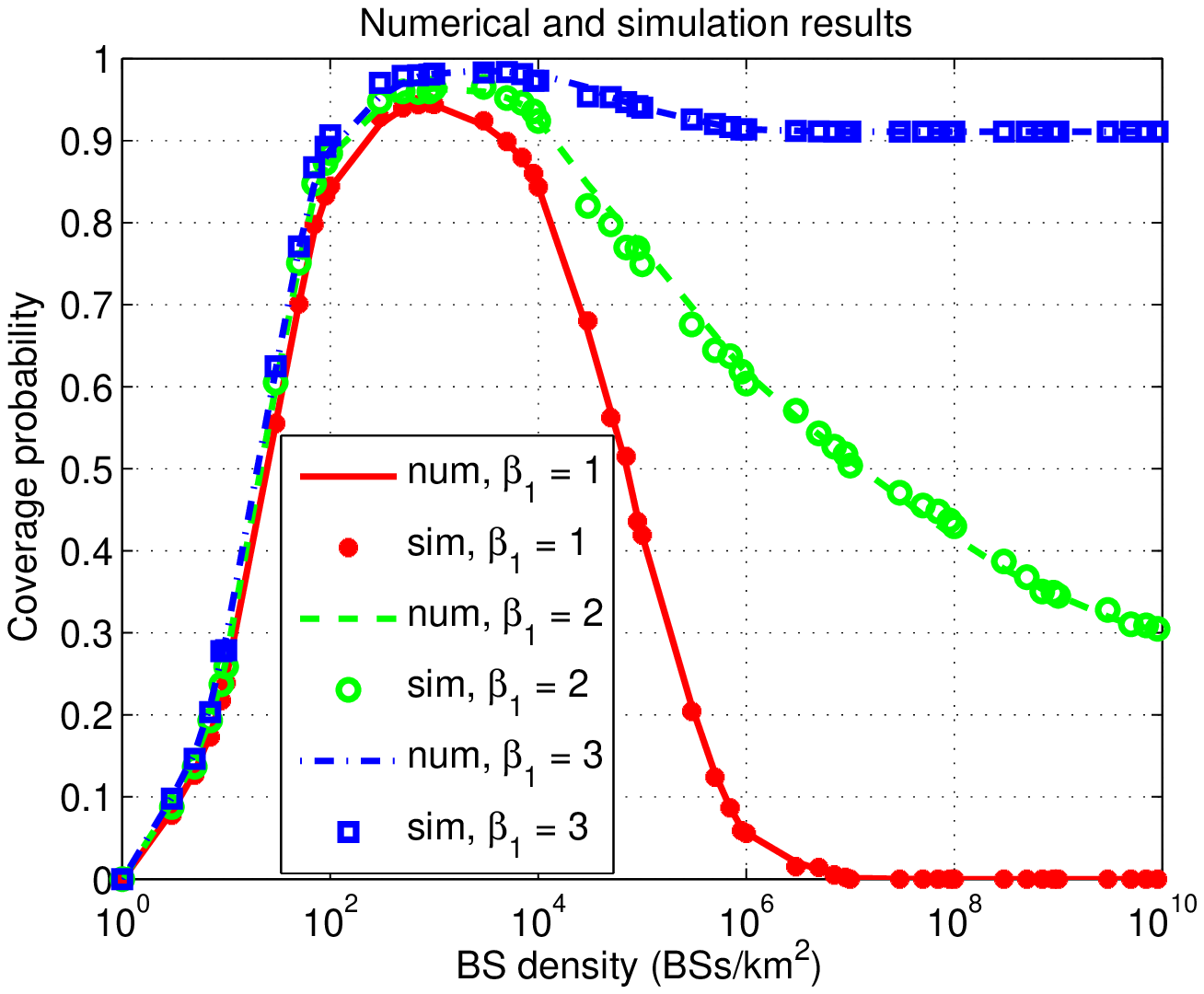}}
    \hspace{-0.1in}
  \subfigure[$A({\lambda},T)$ scaling with $\lambda$]{\label{ASE_general}
    \includegraphics[width=0.235\textwidth]{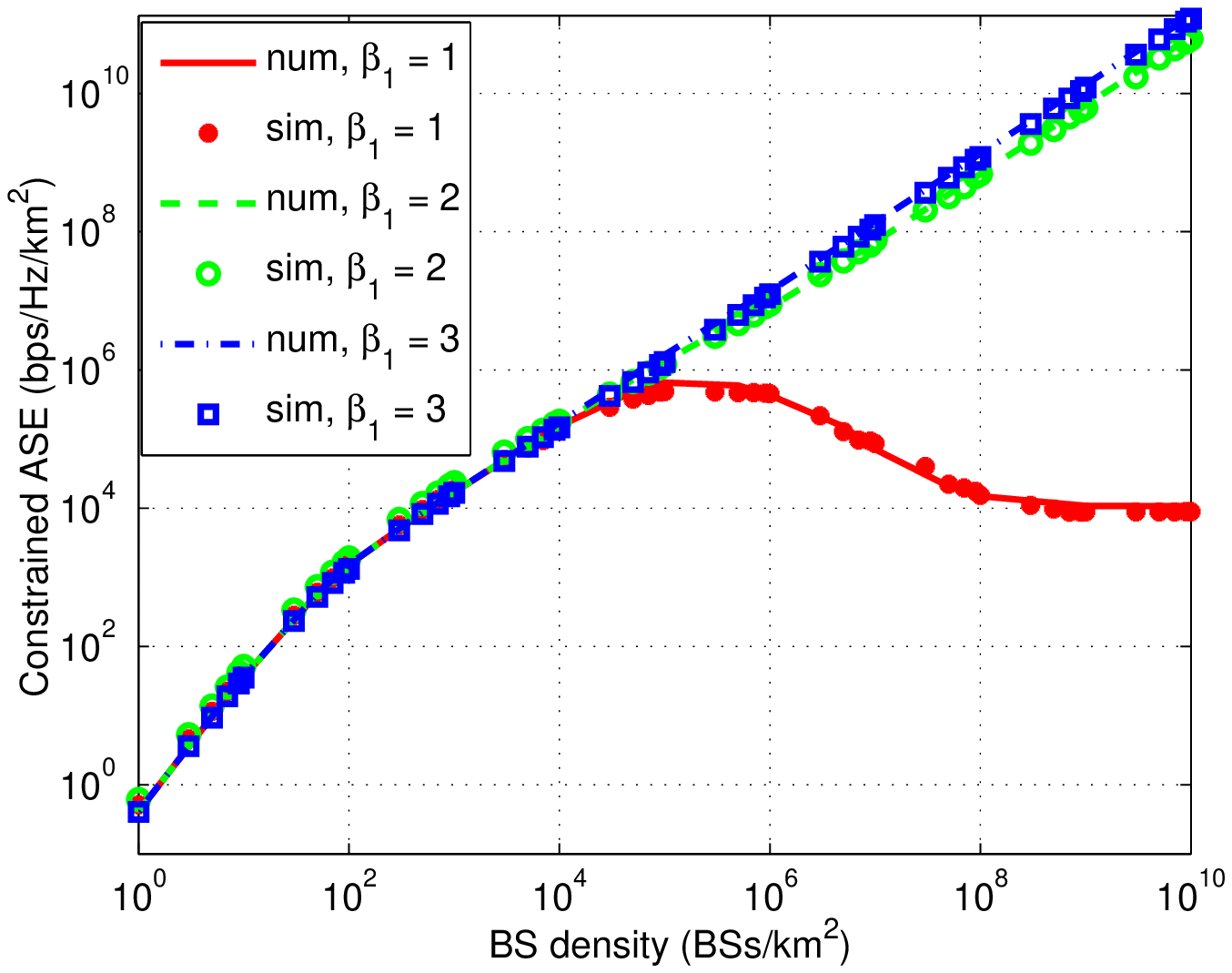}}
    \hspace{0.2in}
  \caption{Coverage probability $P_{c}({\lambda},T)$ and constrained ASE $A({\lambda},T)$ in the general scenario with default system parameters and $\beta_1 = 1,2,3$.}
  \label{general}
\end{figure}

\begin{figure}[!htb]
  \setlength{\abovecaptionskip}{-2pt}
  \setlength{\belowcaptionskip}{0pt}
  \centering
  \vspace{-0.05in}
  \subfigure[$P_{c}({\lambda},T)$ scaling with $\frac{\theta_B \theta_U}{4 \pi^2}$]{\label{Cov_beamalignmentprob}
    \includegraphics[width=0.222\textwidth]{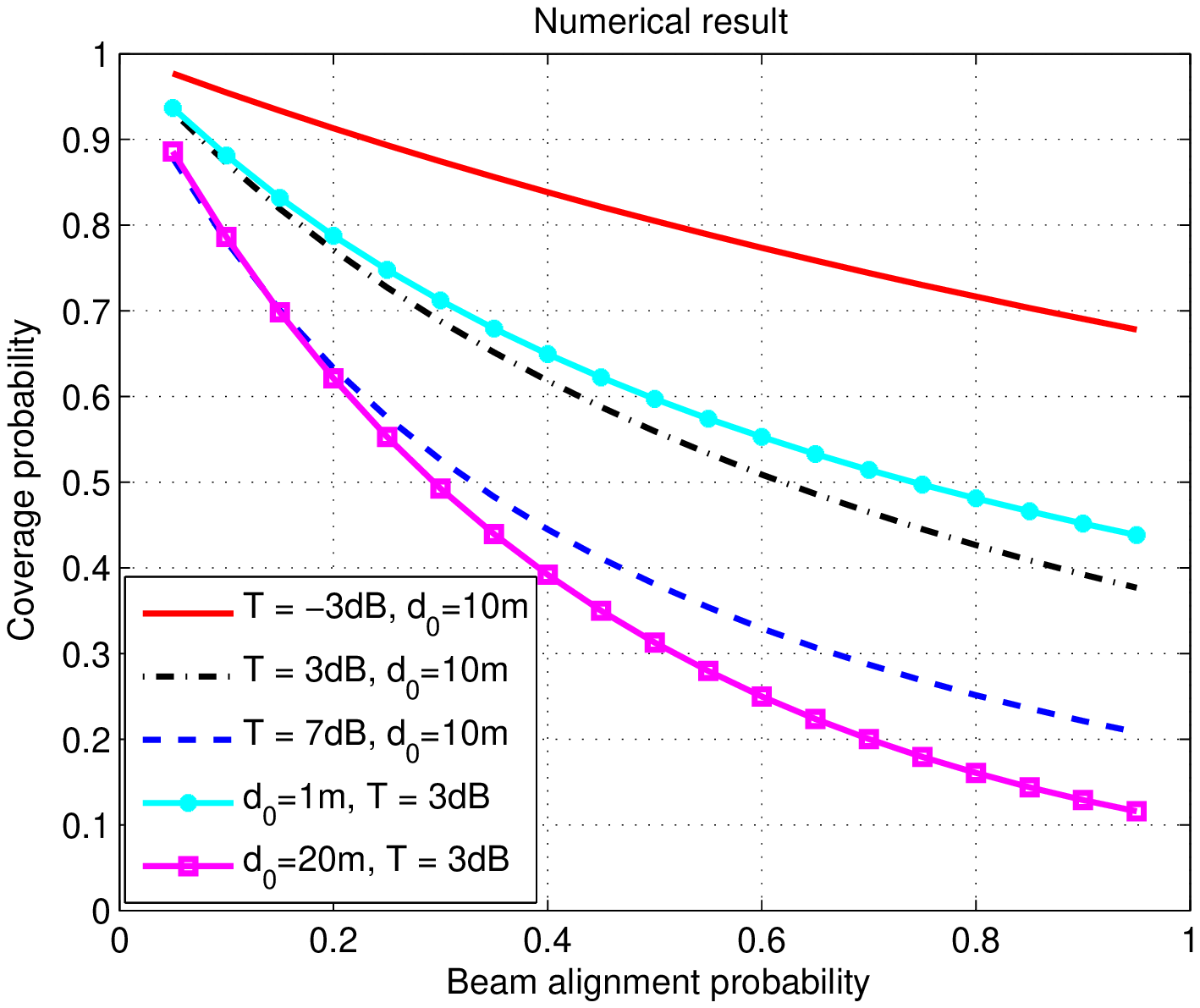}}
    \hspace{-0.1in}
  \subfigure[$A({\lambda},T)$ scaling with $\frac{\theta_B \theta_U}{4 \pi^2}$]{\label{ASE_beamalignmentprob}
    \includegraphics[width=0.225\textwidth]{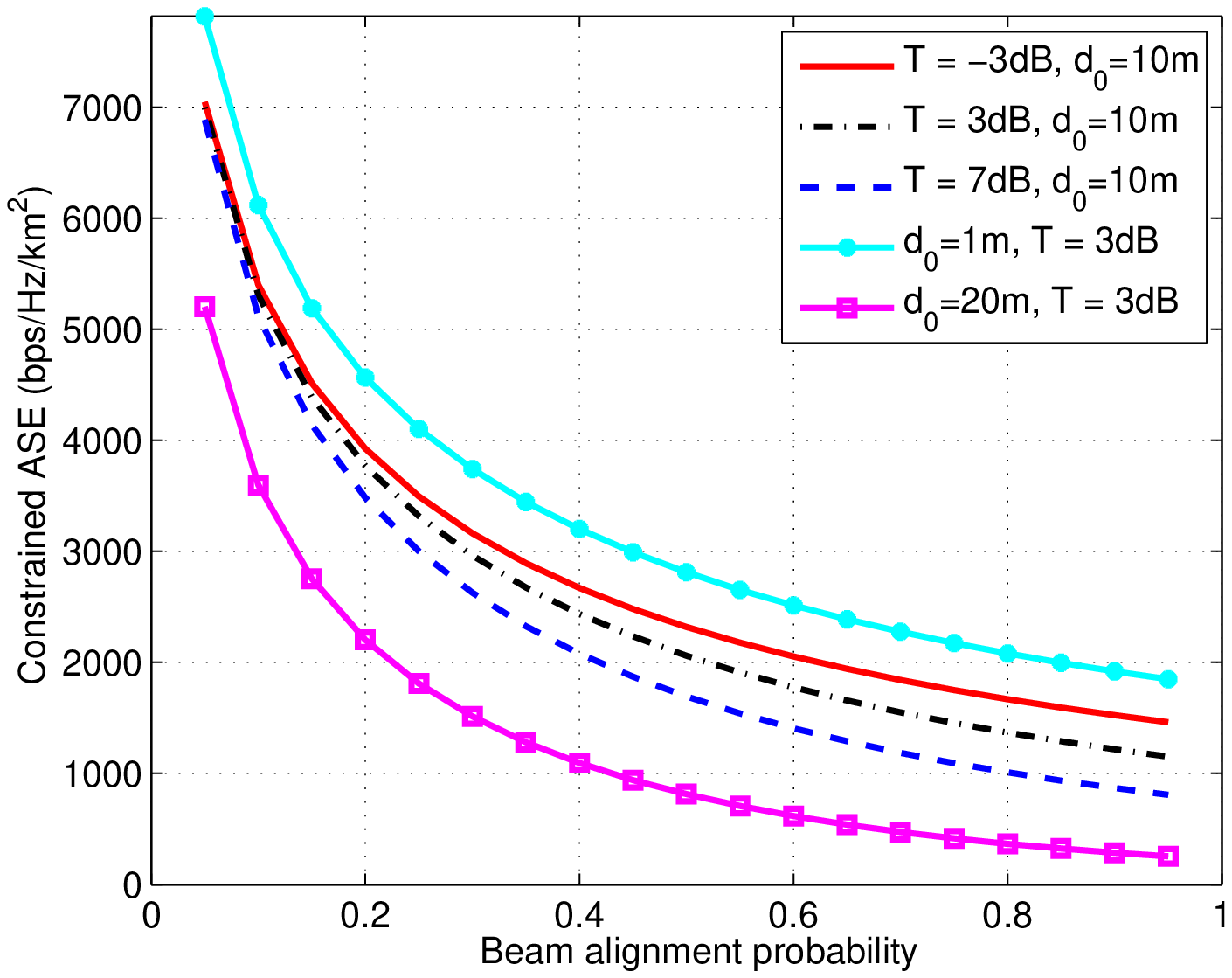}}
    \hspace{0.0in}
  \caption{Coverage probability $P_{c}({\lambda},T)$ and constrained ASE $A({\lambda},T)$ scaling with the beam alignment probability $\theta_B \theta_U/(4 \pi^2)$ in the physically feasible path loss scenario with $\sigma^2 = 0, \beta_1 = 0, \beta_2 = 4, n_B = n_U = 0$ and $\lambda = 1000 ~\text{BSs/km}^2$.}
  \label{beamalignmentprob}
  \vspace{-0.2in}
\end{figure}

\begin{figure}[!ht]
  \setlength{\abovecaptionskip}{-4pt}
  \setlength{\belowcaptionskip}{0pt}
  \centering
  \vspace{0in}
  \subfigure[$P_{c}({\lambda},T)$ scaling with $\lambda$]{\label{Cov_phy_sleep}
    \includegraphics[width=0.227\textwidth]{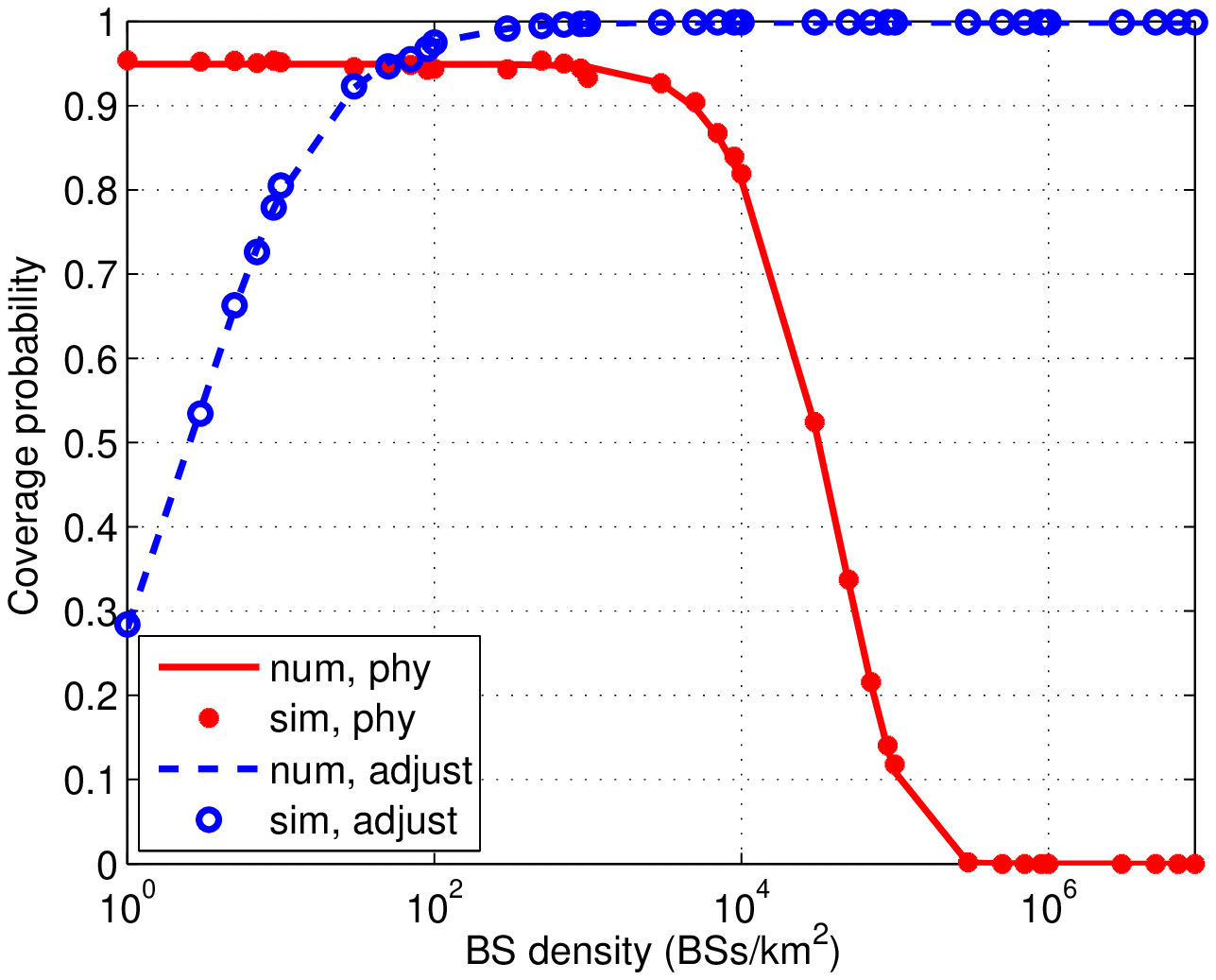}}
    \hspace{-0.1in}
  \subfigure[$A({\lambda},T)$ scaling with $\lambda$]{\label{ASE_phy_sleep}
    \includegraphics[width=0.232\textwidth]{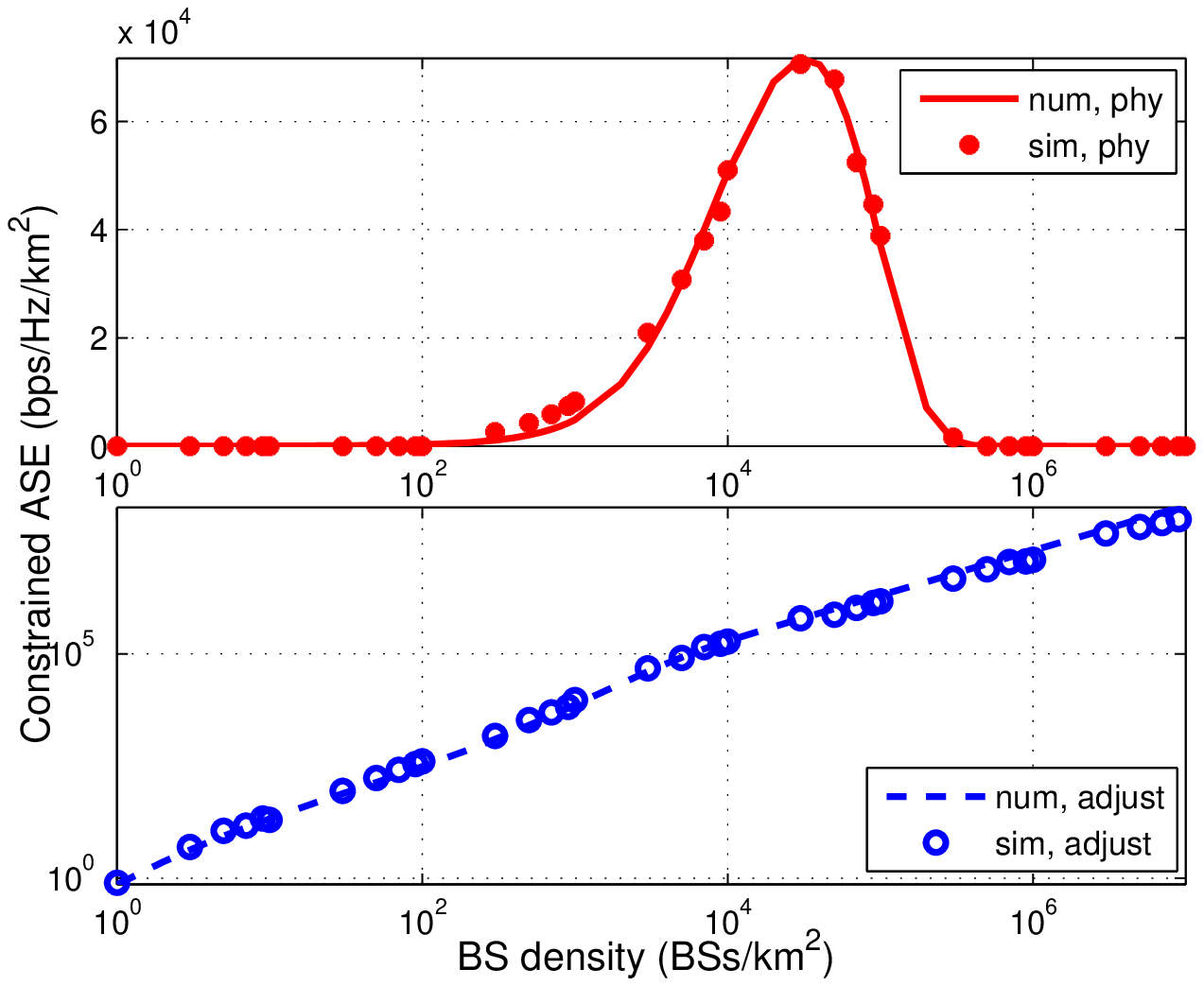}}
    \hspace{0in}
  \caption{Coverage probability $P_{c}({\lambda},T)$ and constrained ASE $A({\lambda},T)$ in the physically feasible path loss scenario and beam pattern adjustment scenario with $\sigma^2 = 0, \beta_1 = 0, \beta_2 = 4$ and $n_B = n_U = 0$.}
  \label{physical_and_sleep}
  \vspace{-0.2in}
\end{figure}

Fig. \ref{Cov_general} shows that when $\beta_1 \leq 2$, the coverage probability has a significant downward trend as proved in Theorem \ref{thm3}. While when $\beta_1 = 3$, the coverage probability approaches a constant when the BS density grows to infinity. In Fig. \ref{ASE_general}: $\beta_1 = 1$ leads to a constant constrained ASE and $\beta_1 = 2,3$ leads to a linearly increasing constrained ASE when the BS density grows to infinity. Note that, even the coverage probability eventually drops to zero, the constrained ASE still keeps increasing when $\beta_1 = 2$. The coverage probability and constrained ASE both increase as $\beta_1$ increases since the interference from BSs within the Fresnel breakpoint dominates the received signal.

An interference-limited network with physically feasible dual-slope path loss model and negligible side lobe gain is discussed as follows. Fig. \ref{beamalignmentprob} shows that the coverage probability and constrained ASE decrease monotonically w.r.t the beam alignment probability under different SINR thresholds $T$ and Fresnel breakpoint distances $d_0$ when $\lambda = 1000~\text{BSs/km}^2$. Here $\frac{\theta_B \theta_U}{4 \pi^2} = 1$ indicates the omni-directional antennas. It shows that the performance is poor when directional transmissions are not used. Conversely, relatively small $\frac{\theta_B \theta_U}{4 \pi^2}$, e.g. $\frac{\theta_B \theta_U}{4 \pi^2} \leq 0.2$, improves the performance notably especially for the constrained ASE. We also find that smaller $d_0$ notably enhances the constrained ASE especially when the beam alignment probability is small. In Fig. \ref{physical_and_sleep}, the coverage probability decreases monotonically with the BS density, while the constrained ASE reaches its maximum and then decreases under the physically feasible path loss model. In the beam pattern adjustment scenario, we set $K=1$. Then according to the equality $\frac{\mathbb{E}[G_i]}{N_B N_U} = \frac{K}{\lambda}$, we adjust the $\frac{\mathbb{E}[G]}{N_B N_U} = \frac{\theta_B \theta_U}{4 \pi^2}$ from 1 to 0, i.e., adjust the beamwidth $\theta_B$ and $\theta_U $ from $2\pi$ to $0$, as the BS density varies from $1 ~\text{BSs/km}^2$ to $9\times 10^6 ~\text{BSs/km}^2$. The result shows that the coverage probability tends to 1 and the constrained ASE grows linearly as the BS density approaches its maximum. The parameter $K$ characterizes the `aggressiveness' of the adaption, while smaller $K$ in general requires more beamforming overhead. In practice, the trade-off between better network performance and lower beamforming overhead should be carefully balanced when selecting $K$.

\section{Conclusion}\label{seccon}
We investigate the coverage and capacity performance of a downlink UDN with directional transmissions using stochastic geometry. The coverage probability is proved to approach zero when the network is extremely dense, the Rayleigh faded interference is considered and the near-field path loss exponent is no larger than 2. The coverage probability and constrained ASE still have a downtrend even under physically feasible path-loss model conditions. However, adapting the beam pattern according to the BS density can provide a constant coverage probability and even linear increase in the constrained ASE when the BS density goes to infinity. Based on this work, further discussions on the impact of beam alignment error and beamforming overhead on the coverage probability and constrained ASE, can be adopted.

\section*{Appendix A: Proof of Theorem 1}
We sketch the major steps due to the space limitation. Applying the distribution of the distance between the typical UE and $B_o$ to the definition of coverage probability, we have the equality (\ref{P_c_1}). For the term inside the first integral, using the property of exponentially distributed $h$, the object is turned to calculate the Laplace transform of the aggregated interference $I$ evaluated at $\mu T/ N_B N_U\alpha_0r_0^{-\beta_1}$ conditioned on $0< r_0<d_0$. Using the i.i.d property of $g_i$ and $G_i$, and applying the probability generating functional of PPP to the object, we can calculate the integral on the left side of the equality (\ref{P_c_2}), where step (\emph{a}) follows that if the nearest BS is within the Fresnel breakpoint, the interfering BSs can be either within or outside the distance $d_0$. Finally, taking similar steps to the second integral in equality (\ref{P_c_1}) leads to the resultant coverage probability.

\section*{Appendix B: Proof of Monotonicity of Coverage Probability w.r.t the Near-field Intensity}
For the coverage probability given in (\ref{simple_cov}), the partial derivative w.r.t the near-field intensity $\lambda \pi d_0^2$ is:
\begin{align*}
 \begin{split}
     \frac{\partial P_c}{\partial (\lambda \pi d_0^2)} =
     & \frac{\frac{\theta_B\theta_U}{4 \pi^2}\!\left(\!\frac{T}{1+T} \!+\! \rho(T,\beta_2)\!\right)}{1-\frac{\theta_B\theta_U}{4 \pi^2}\frac{T}{1+T}}
     \left(\!e^{-\lambda \pi d_0^2 \left(\!\frac{\theta_B\theta_U}{4 \pi^2} \rho(T,\beta_2)+ 1\!\right)}\right.\\
     &\left.- e^{-\lambda \pi d_0^2 \left(\!\frac{\theta_B\theta_U}{4 \pi^2} \rho(T,\beta_2)+\frac{\theta_B\theta_U}{4 \pi^2}\frac{T}{1+T}\!\right)}\!\right),
 \end{split}
\end{align*}
where $\frac{\theta_B\theta_U}{4 \pi^2}\in(0,1]$, $\frac{T}{1+T}\in(0,1)$ and $\rho(T,\beta_2)\in\mathbb{R}_{+}$. Then we have $\frac{\theta_B\theta_U}{4 \pi^2}\frac{T}{1+T} < 1$ and thus the inequality $\frac{\partial P_c}{\partial (\lambda \pi d_0^2)} < 0$ establishes. The monotonicity of $P_c$ w.r.t $\lambda \pi d_0^2$ is proved.

\section*{Appendix C: Proof of Monotonicity of constrained ASE w.r.t the Beam Alignment Probability}
For the constrained ASE given in (\ref{simple_ASE}), it is obvious that the second term inside the integral monotonically decreases w.r.t $\frac{\theta_B\theta_U}{4 \pi^2}$. Denote the partial derivative of the first term w.r.t $\frac{\theta_B\theta_U}{4 \pi^2}$ by $A_1$, and we have:
\begin{align*}
 \begin{split}
 A_1=
 &\frac{\left(\!-\!\lambda \pi d_0^2\!\left(\!\frac{e^t\!-\!1}{e^t} \!+\! \rho(e^t\!-\!1,\beta_2)\!\!\right)\!\!\left(\!1\!-\!\frac{\theta_B\theta_U}{4 \pi^2}\frac{e^t-1}{e^t}\!\right)\!-\!\frac{e^t-1}{e^t}\!\right)\!}
 {(1-\frac{\theta_B\theta_U}{4 \pi^2}\frac{e^t-1}{e^t})^2}\\
 &\quad \cdot e^{-\lambda \pi d_0^2 \frac{\theta_B\theta_U}{4 \pi^2}\left(\!\frac{e^t-1}{e^t}+\rho(e^t\!-1,\beta_2)\!\right)}\\
 &+\frac{\left(\!\lambda \pi d_0^2 \rho(e^t\!-\!1,\beta_2)\!\!\left(\!1\!-\!\frac{\theta_B\theta_U}{4 \pi^2}\frac{e^t-1}{e^t}\!\right)\!+\!\frac{e^t-1}{e^t}\!\right)\!}
 {(1-\frac{\theta_B\theta_U}{4 \pi^2}\frac{e^t-1}{e^t})^2}\\
 &\quad \cdot e^{-\lambda \pi d_0^2 \left(1+\frac{\theta_B\theta_U}{4 \pi^2} \rho(e^t\!-\!1,\beta_2)\right)}\\
 <&\frac{\left(\!-\!\lambda \pi d_0^2\!\left(\!\frac{e^t\!-\!1}{e^t} \!+\! \rho(e^t\!-\!1,\beta_2)\!\!\right)\!\!\left(\!1\!-\!\frac{\theta_B\theta_U}{4 \pi^2}\frac{e^t-1}{e^t}\!\right)\!-\!\frac{e^t-1}{e^t}\!\right)\!}
 {(1-\frac{\theta_B\theta_U}{4 \pi^2}\frac{e^t-1}{e^t})^2}\\
 &\cdot \!\left(\!\!e^{-\lambda \pi d_0^2 \frac{\theta_B\theta_U}{4 \pi^2}\!\left(\!\!\frac{e^t\!-1}{e^t}+\rho(\!e^t\!-\!1,\beta_2\!)\!\!\right)}
 \!\!-e^{-\lambda \pi d_0^2 \left(\!1\!+\frac{\theta_B\theta_U}{4 \pi^2} \rho(\!e^t\!-\!1,\beta_2\!)\!\!\right)}\!\!\right)\\
 <&0,
 \end{split}
\end{align*}
where $\frac{\theta_B\theta_U}{4 \pi^2}\in(0,1]$, $\frac{e^t\!-\!1}{e^t}\in(0,1)$ and $\rho(e^t-1,\beta_2)\in\mathbb{R}_{+}$. Therefore, the first term inside the integral is also monotonically decreasing. And the monotonicity of the constrained ASE w.r.t $\frac{\theta_B\theta_U}{4 \pi^2}$ is established due to the monotonicity of $P_c(\lambda,T)$ and the nature of addition of monotonic functions.

\bibliographystyle{IEEEtran}
\bibliography{bibfile}
\end{document}